%% file: main.tex
\documentclass[11pt]{article}
\usepackage[margin=1in]{geometry}

\usepackage{amsmath,amsthm,mathtools}
\usepackage{amssymb}
\usepackage{bm}
\usepackage{graphicx}
\usepackage{booktabs}
\usepackage{multirow}
\usepackage{float}
\usepackage{enumitem}
\usepackage[numbers,sort&compress]{natbib}
\usepackage[hidelinks]{hyperref}
\usepackage{xcolor}
\usepackage{xspace}
\usepackage{longtable}
\usepackage{caption}

\theoremstyle{plain}
\newtheorem{theorem}{Theorem}[section]
\newtheorem{proposition}[theorem]{Proposition}

\theoremstyle{definition}
\newtheorem{definition}[theorem]{Definition}

\theoremstyle{remark}
\newtheorem{remark}[theorem]{Remark}

\newcommand{\E}{\mathbb{E}}
\newcommand{\R}{\mathbb{R}}
\newcommand{\Var}{\mathrm{Var}}
\newcommand{\Hajek}{H\'{a}jek}

\title{Approximate Design-Based Intervals for Downsampled Cross-Sectional Market Aggregates:\\
A Randomized Design for Bandwidth-Constrained Financial Data Pipelines}

\author{Zeng Minmin\\ AlphaGBM}

\date{}

\input{tables/numbers}

\begin{document}

\maketitle

\begin{abstract}
\input{sections/abstract}
\end{abstract}

\medskip
\noindent\textbf{Keywords:} survey sampling, approximate design-based intervals, options analytics, data pipelines,
cross-sectional aggregation, randomized design, market microstructure

\input{sections/introduction}
\input{sections/problem}
\input{sections/method}
\input{sections/experiments}
\input{sections/results}
\input{sections/related}
\input{sections/discussion}
\input{sections/conclusion}

\bibliographystyle{plainnat}
\bibliography{references}

\appendix
\input{sections/appendix_proofs}
\input{sections/appendix_full_results}
\input{sections/appendix_stress}
\input{sections/appendix_design_extended}
\input{sections/appendix_ordering}
\input{sections/appendix_reproducibility}

\end{document}

%% file: tables/numbers.tex
\newcommand{\USdays}{80\xspace}

\newcommand{\CovTen}{91.9\%\xspace}
\newcommand{\MaeTopk}{8.15\xspace}
\newcommand{\MaePPS}{5.31\xspace}
\newcommand{\MaeRed}{35\%\xspace}
\newcommand{\CohenD}{0.75\xspace}
\newcommand{\IdWidthLo}{2.7\xspace}
\newcommand{\IdWidthHi}{6.4\xspace}
\newcommand{\IdFeasLo}{9\xspace}
\newcommand{\IdFeasHi}{26\xspace}
\newcommand{\HKmaeTopk}{1.32\xspace}
\newcommand{\HKmaePPS}{2.42\xspace}

%% file: sections/abstract.tex
Financial institutions routinely downsample cross-sectional options panels to meet bandwidth and cost
constraints. The industry default---deterministic Top-$k$ selection by open interest---provides no
estimate of the aggregation error it introduces, rendering downstream risk metrics unauditable.
We make the identification problem precise. For deterministic rules whose selected set depends only
on auxiliary variables and not on the outcome (\emph{target-blind} rules, of which Top-$k$-by-open-interest
is one), the target is not identified from the retained outcomes alone unless bounds or modelling
assumptions are imposed; under outcome bounds it is partially identified, with a sharp worst-case
interval we characterise in closed form. This scope restriction matters: rules that deliberately
retain extreme observations are deterministic yet do admit finite bounds.
We propose replacing Top-$k$ with probability sampling whose inclusion probabilities are proportional
to open interest, paired with the \Hajek{} ratio estimator and a linearised design-based variance
estimator, yielding approximate design-based intervals around every aggregate.
Experiments on 80 US trading sessions (211 tickers/day) and 83 sessions of Hong Kong options
(131 tickers/day) show that for OI-weighted implied volatility the randomized design reduces
mean absolute error by 35\% (MAE 5.31 vs.\ 8.15 at 10\% retention; paired
$t=6.68$, Cohen's $d=0.75$, moving-block-bootstrap 95\% CI $[1.82, 3.96]$
with no sign reversal in 2{,}000 replicates),
and the sharp worst-case interval available to Top-$k$ is 2.7--6.4$\times$ wider than the
sampling interval under oracle bounds taken from the full panel, and 9--26$\times$ wider under
bounds computable without the discarded data; the identification interval is worst-case valid
whereas the sampling interval is only nominally 95\%, so the two are not interchangeable.
The intervals are first-order approximations: empirical coverage is 91.9\% against a nominal 95\%,
and is below nominal at every retention rate we examine.
Point accuracy is not universally improved: in the far more concentrated Hong Kong panel Top-$k$ attains
lower error than sampling at low-to-moderate retention, and the ranking reverses only at high retention.
We further document that inclusion probabilities must match the target's influence function:
OI-proportional sampling is suboptimal for equal-weight and median statistics, motivating
a \emph{design-aware} framework for production pipelines.

%% file: sections/introduction.tex
\section{Introduction}

Options analytics platforms ingest cross-sectional panels---daily snapshots of implied volatility,
open interest, and greeks across hundreds of underlyings---to compute market-level aggregates such
as OI-weighted implied volatility, the volatility risk premium (VRP), and put-call ratios.
At full fidelity these panels are large.  In our own collection, a single
trading day of US equity options contains roughly $2{,}010$
contract-level records (strike $\times$ expiry $\times$ right) across the
liquid underlyings; the per-underlying scalar panel studied in this paper
is already the \emph{output} of a first aggregation stage that reduces
those contracts to about $210$ rows.  Production systems that refresh
intraday rather than once at the close multiply this by the number of
snapshots per session.  For bandwidth-constrained pipelines---mobile
terminals, low-latency dashboards, cost-sensitive cloud
architectures---we focus on 5--20\% as an aggressive compression
regime relevant to our deployment scenario.  We
study the second stage, where a per-underlying panel must be reduced
further before transmission or storage.
Table~\ref{tab:pipeline} gives a concrete deployment scenario:
while a single daily snapshot is small, the volume multiplies with
intraday refresh frequency, number of markets, client terminals, and
downstream models; vendor licensing costs that scale per-record
provide an additional economic incentive for aggressive retention ---
many options data vendors charge per-ticker or per-record, so a
10\% retention rate translates directly to a 90\% reduction in
data licensing cost, independent of bandwidth savings.

\input{tables/pipeline}

A common retention heuristic in bandwidth-constrained market-data
pipelines is \emph{Top-$k$}: keep the $k$ underlyings
with the largest open interest and discard the rest.
Top-$k$ is deterministic, easy to implement, and intuitively reasonable---large-OI names
``matter more.''
Yet it has a deficiency that the present work makes precise.
Top-$k$-by-open-interest is \emph{target-blind}: which rows survive
depends only on the auxiliary variable, never on the quantity being
aggregated.  We show (Proposition~\ref{prop:unident}) that for such
rules, absent a priori bounds on the discarded values, the aggregate is
not identified at all; and that once bounds are imposed the sharp
worst-case interval (Proposition~\ref{prop:partialid}) has width
proportional to the auxiliary mass left behind.  Empirically that
interval is \IdWidthLo--\IdWidthHi$\times$ wider than the sampling
interval when the bounds are taken from the full panel, and
\IdFeasLo--\IdFeasHi$\times$ wider under bounds computable without the
discarded data; the two are not interchangeable, since the former is
worst-case valid while the latter is only nominally $95\%$.  The claim is therefore
not that Top-$k$ admits \emph{no} error quantification --- a rule that
deliberately retained the extremes would --- but that the rule used in
practice yields only a very wide worst-case bound, whereas any design
with known non-zero inclusion probabilities supports a far tighter
design-based interval.

This paper makes the following contributions:

\begin{enumerate}[leftmargin=*,itemsep=2pt]
\item \textbf{Identification analysis for target-blind selection.}
We characterise what the retained subsample does and does not determine.
For deterministic rules whose selected set depends only on auxiliary
variables (\emph{target-blind} rules, Definition~\ref{def:blind}), the
OI-weighted target is unidentified without outcome bounds
(Proposition~\ref{prop:unident}); with valid bounds it is partially
identified and admits a sharp worst-case interval in closed form
(Proposition~\ref{prop:partialid}).  We show by counterexample
(Remark~\ref{rem:blind-needed}) that the target-blind restriction cannot
be dropped: deterministic rules that retain extreme observations do admit
finite bounds.  Empirically the sharp interval is
\IdWidthLo--\IdWidthHi$\times$ wider than the sampling interval, but it
carries worst-case rather than nominal coverage, so the two are not
interchangeable.

\item \textbf{Matched-budget probability designs, fixed versus random size.}
We calibrate every design to the same expected budget by water-filling
\eqref{eq:waterfill} and compare independent Poisson PPS with fixed-size
systematic PPS.  Fixed-size PPS records nominally lower MAE in our
production implementation, but an ordering-sensitivity analysis
(Appendix~\ref{app:ordering}) shows that this ranking is not robust: the
spread induced by re-ordering units exceeds the gap between the two
designs in 9 of 10 market\,$\times$\,retention cells.  We therefore treat
them as empirically indistinguishable in accuracy.  The substantive
trade-off is structural: fixed-size PPS respects
a hard bandwidth cap, whereas Poisson PPS admits a simple unbiased variance estimator
for the Horvitz--Thompson total from first-order inclusion probabilities
alone (and a first-order linearised approximation for the \Hajek{} ratio) --- an
explicit engineering trade-off rather than a single winner.  Intervals are
first-order approximations: coverage is \CovTen{} against nominal 95\% at
10\% retention and below nominal at every rate examined.

\item \textbf{Empirical market heterogeneity.}
On OI-weighted implied volatility the sampling design reduces MAE by
\MaeRed{} versus Top-$k$ at 10\% retention in the US
(\MaePPS{} vs.\ \MaeTopk{}, paired-$t$ $p<10^{-4}$,
Cohen's $d=\CohenD$), across \USdays{} sessions.  The ranking does not
transfer: in the far more concentrated Hong Kong panel Top-$k$ attains
lower error at low-to-moderate retention (\HKmaeTopk{} vs.\ \HKmaePPS{} at
10\%) and the ordering reverses only at high retention.  We trace this to
omitted open-interest share together with the retained-minus-omitted
outcome gap, rather than to concentration alone.

\item \textbf{Design--target alignment.}
We derive that inclusion probabilities should track the target's influence
function \eqref{eq:ifs}, and verify the prediction across four aggregates:
OI-proportional sampling is appropriate only for the OI-weighted target,
the equal-weight mean's oracle optimum is influence-proportional
($\pi_i^\star\propto|v_i-T|$) with uniform a feasible outcome-blind
proxy, the median calls for near-uniform probabilities, and
dispersion calls for influence-proportional rather than uniform
probabilities.  We distinguish feasible designs from oracle designs
throughout.
\end{enumerate}

\paragraph{Related work.}
Our estimators rest on classical survey sampling: the Horvitz--Thompson
unbiased estimator for unequal-probability designs~\cite{horvitz1952}, the
\Hajek{} ratio estimator for population means~\cite{hajek1971}, and
design-based variance estimation under Poisson sampling~\cite{sarndal1992}.
Our optimal-allocation result (Prop.~\ref{prop:optalloc}) generalizes
Neyman allocation~\cite{neyman1934,cochran1977} from stratum sizes to
per-unit influence functions.
These tools are standard in official statistics but, to our knowledge,
have not been applied to certify error in bandwidth-constrained financial
data pipelines, where deterministic Top-$k$ retention is the default.
The auditability requirement is not merely academic: supervisory guidance
on model risk management~\cite{sr1107} obliges institutions to quantify
the uncertainty their data-processing steps introduce, which an
aggregate reported without any uncertainty measure cannot satisfy.
Methodologically, our work is motivated by recent results in randomized
KV-cache eviction for large language models~\cite{kvcert2026}, which
established an analogous impossibility for deterministic eviction policies
and proposed Poisson-sampling-based design-based confidence intervals.
We transfer this insight to financial cross-sectional aggregation, supplying
domain-specific estimators, an optimal-design theory with empirical
validation, and multi-market evidence absent from the machine-learning
setting.

%% file: tables/pipeline.tex
\begin{table}[t]
\centering
\caption{Illustrative deployment scenario for bandwidth-constrained
options analytics. The per-underlying panel studied in this paper is the
output of a first aggregation (contract $\to$ underlying); we study the
second stage ($\sim$210 rows $\to$ $\sim$21 rows at 10\% retention).
The \emph{daily outbound terminal-delivery volume} figure combines
refresh frequency, three markets and 50--500 terminals only;
downstream-model fan-out is listed separately as it depends on the
specific pipeline topology and is not simply multiplied into the
delivery volume.}
\label{tab:pipeline}
\small
\begin{tabular}{@{}lrr@{}}
\toprule
Stage & Records & Payload \\
\midrule
Contract-level (US, daily) & $\sim$2{,}010 & $\sim$320\,KB \\
Per-underlying (after 1st agg.) & $\sim$210 & $\sim$34\,KB \\
After 10\% retention & $\sim$21 & $\sim$3.4\,KB \\
\midrule
\multicolumn{3}{@{}l}{\emph{Scale multipliers (production system):}} \\
\quad Intraday refresh (e.g.\ 5-min) & $\times 78$ & --- \\
\quad Multi-market (US + HK + EU) & $\times 3$ & --- \\
\quad Client terminals & $\times 50$--$500$ & --- \\
\quad Downstream models & $\times 5$--$10$ & --- \\
\midrule
Daily outbound terminal delivery (full) & & $\sim$400\,MB--4\,GB \\
Daily outbound terminal delivery (10\%) & & $\sim$40\,MB--400\,MB \\
\bottomrule
\end{tabular}
\end{table}

%% file: sections/problem.tex
\section{Problem Formulation}\label{sec:problem}

\subsection{Setting}

Fix a trading session and let
$P=\{(x_i,w_i)\}_{i=1}^{N}$
be the cross-section of instruments observed at the close, where
$x_i\in\R$ is the quantity of interest for instrument $i$
(e.g.\ at-the-money implied volatility) and $w_i>0$ is an auxiliary
size variable (open interest).  A \emph{target functional} $T(P)$ maps
the panel to a scalar market aggregate; we study four:

\begin{align}
T_{\mathrm{oi}}(P)  &= \frac{\sum_i w_i x_i}{\sum_i w_i}, &
T_{\mathrm{eq}}(P)  &= \frac{1}{N}\sum_i v_i, \label{eq:targets1}\\
T_{\mathrm{med}}(P) &= \mathrm{median}_i(c_i), &
T_{\mathrm{disp}}(P)&= \Big(\tfrac1N\textstyle\sum_i (x_i-\bar x)^2\Big)^{1/2},
\label{eq:targets2}
\end{align}

where $v_i$ is the volatility risk premium and $c_i$ the put/call ratio
of instrument $i$.

A \emph{downsampling operator} $D$ returns an index set
$S=D(P)\subseteq\{1,\dots,N\}$ with $|S|\le n\ll N$.  Downstream systems
observe only $\{x_i\}_{i\in S}$ together with whatever auxiliary
information the pipeline forwards, and must report an estimate
$\hat T$ of $T(P)$.

\begin{definition}[Information set]\label{def:info}
Throughout we condition on the full auxiliary vector $w_{1:N}$ and on
$N$: ranking by open interest already requires the pipeline to know
$w_{1:N}$, so a comparison in which the deterministic rule is denied
this information would be unfair.  The unobserved quantity is
$x_{S^c}=\{x_i\}_{i\notin S}$.
\end{definition}

\begin{definition}[Design-based interval]\label{def:interval}
An interval $C_{1-\alpha}$ computed from $(x_S, w_{1:N}, \pi_{1:N})$ is a
\emph{design-based $(1-\alpha)$ interval} if its coverage
$\Pr\{T(P)\in C_{1-\alpha}\}$ is taken with respect to the sampling
design alone, holding the finite population $P$ fixed.
\end{definition}

We deliberately do \emph{not} call these intervals exact confidence
intervals or error certificates.  The construction of
Section~\ref{sec:method} is a first-order (linearisation) normal
approximation; we establish neither a finite-sample nor an exact
asymptotic coverage guarantee, and Section~\ref{sec:results} documents
empirical coverage in the $89$--$94\,\%$ range at a nominal $95\,\%$ level.

\subsection{Target-blind deterministic selection}

The rule used in practice --- retain the $n$ names with the largest
$w_i$ --- depends on $x$ only through $w$.  We isolate exactly this
property.

\begin{definition}[Target-blind rule]\label{def:blind}
A deterministic operator $D$ is \emph{target-blind} if
$D(P)=D(w_{1:N})$; that is, the retained set is a function of the
auxiliary vector alone and does not depend on $x_{1:N}$.
\end{definition}

Top-$k$-by-open-interest is target-blind.  So are round-robin,
sector-quota and liquidity-threshold rules.

\begin{proposition}[No finite worst-case interval without bounds]
\label{prop:unident}
Let $D$ be target-blind with $|S|=n<N$, and suppose the analyst's prior
knowledge places no bound on the unobserved coordinates, i.e.\
$x_{S^c}$ ranges over $\R^{N-n}$.  Then for the OI-weighted target
$T_{\mathrm{oi}}$ (and for the equal-weight mean $T_{\mathrm{eq}}$)
the identified set given $(x_S,w_{1:N})$ is all of
$\R$, so no finite-width interval is available.
The conclusion extends to $T_{\mathrm{disp}}$ (unbounded above) and,
under a count condition, to $T_{\mathrm{med}}$; see
Remark~\ref{rem:target-scope} for the precise scope across all four targets.
\end{proposition}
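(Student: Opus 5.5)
The plan is to write down the identified set explicitly and show it is all of $\R$ by a direct surjectivity argument. Target-blindness is what licenses this. Since $S=D(w_{1:N})$ does not depend on $x_{1:N}$, changing the discarded coordinates $x_{S^c}$ leaves the retained set, and hence the observed data $(x_S,w_{1:N})$, unchanged. So every completion $x_{S^c}\in\R^{N-n}$ is observationally equivalent to the true one. The identified set is therefore exactly the image
\[
\mathcal I=\bigl\{T(x_S,x_{S^c},w_{1:N}) : x_{S^c}\in\R^{N-n}\bigr\},
\]
and it suffices to show this image is $\R$, or unbounded above in the dispersion case. I will stress that this first step uses target-blindness: without it, perturbing $x_{S^c}$ could change $S$ and break observational equivalence. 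That is precisely what Remark~\ref{rem:blind-needed} exploits.

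For $T_{\mathrm{oi}}$, fix an index $j\notin S$, which exists since $n<N$. Set all other discarded coordinates to an arbitrary value, say $0$, and let $x_j=t$. Then
\[
T_{\mathrm{oi}}=\frac{\sum_{i\in S}w_ix_i + w_j t}{\sum_i w_i}
\]
is an affine function of $t$ with slope $w_j/\sum_i w_i>0$, because $w_j>0$. It is therefore a bijection $\R\to\R$, so $\mathcal I=\R$. The same argument works for $T_{\mathrm{eq}}$, with slope $1/N$ applied to the unobserved coordinate $v_j$. Here I interpret the unbounded prior as applying to the outcome entering that target. Since the identified set is all of $\R$, any set-valued interval valid for all completions must equal $\R$, so no finite-width interval exists.

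The extensions are handled similarly.
\begin{itemize}
\item For $T_{\mathrm{disp}}$, again vary a single $x_j$ with $j\notin S$. The variance $\frac1N\sum_i(x_i-\bar x)^2$ is a quadratic in $x_j$ with leading coefficient $\frac1N(1-\frac1N)>0$, so it tends to $\infty$ as $|x_j|\to\infty$ and the set is unbounded above.
\item For $T_{\mathrm{med}}$, the count condition is presumably $N-n>N/2$, i.e.\ the discarded units form a majority. Setting all discarded $c_i=t$ then makes the median equal to $t$, which ranges over $\R$. With a nonmajority the median stays trapped between order statistics of the retained values, so those cases are deferred to Remark~\ref{rem:target-scope}.
\end{itemize}

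The main obstacle is not computational. It is stating the identified set correctly, namely justifying that every completion is admissible and observationally equivalent. That justification rests on target-blindness together with conditioning on $w_{1:N}$ and $N$ as in Definition~\ref{def:info}. I would make this step explicit before the one-line affine argument.
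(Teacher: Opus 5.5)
Your treatment of $T_{\mathrm{oi}}$, $T_{\mathrm{eq}}$ and $T_{\mathrm{disp}}$ matches the paper's argument. Target-blindness makes every completion $x_{S^c}$ observationally equivalent, and moving a single unobserved coordinate then drives the target to infinity, affinely for the means and quadratically for the variance. Your explicit bijection $t\mapsto T_{\mathrm{oi}}$ is, if anything, slightly tidier than the paper's ``unbounded plus continuity'' step.

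The one genuine error is in the median bullet. Your condition $N-n>N/2$ is sufficient, and your construction with all discarded $c_i=t$ is correct under it. But your claim that when the discarded units are not a majority ``the median stays trapped between order statistics of the retained values'' is false at the even boundary $m=N-n=N/2$. There each of the two middle order statistics is unconstrained on one side. Set all $m$ missing values equal to $t$:
\begin{itemize}
\item for $t\ge\max_{i\in S}c_i$ the median is $(\max_{i\in S}c_i+t)/2$;
\item for $t\le\min_{i\in S}c_i$ it is $(t+\min_{i\in S}c_i)/2$.
\end{itemize}
Since the median is continuous in $t$, the identified set is again all of $\R$; the paper's example is $N=4$ with retained values $\{0,10\}$.

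The correct threshold, which is the ``precise scope'' the statement defers to in Remark~\ref{rem:target-scope}, is $m\ge\lceil N/2\rceil$. For odd $N$ this coincides with your $m>N/2$, but for even $N$ it also includes $m=N/2$. Your boundedness claim holds only strictly below that threshold, i.e.\ for $m\le\lceil N/2\rceil-1$. Only then are both middle order statistics sandwiched between retained order statistics.
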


\begin{proof}
Because $D$ is target-blind, $S=D(w_{1:N})$ does not change when
$x_{S^c}$ changes.  Pick any $j\notin S$ and let $x_j\to\pm\infty$
holding $x_S$ and $w_{1:N}$ fixed.  The retained data $(x_S,w_{1:N})$
are unchanged while
$T_{\mathrm{oi}}=\big(\sum_{i\in S}w_ix_i+\sum_{i\notin S}w_ix_i\big)/\sum_i w_i$
diverges, since $w_j>0$.  Hence every real value of $T_{\mathrm{oi}}$ is
consistent with the observed data.
\end{proof}

\begin{remark}[Why the target-blind restriction is necessary]
\label{rem:blind-needed}
The restriction in Definition~\ref{def:blind} cannot be dropped.
Consider the deterministic rule ``retain $\operatorname{arg\,min}_i x_i$,
$\operatorname{arg\,max}_i x_i$, and any $n-2$ further indices''.  Every unretained
$x_i$ then lies in $[\min_{i\in S}x_i,\max_{i\in S}x_i]$ by
construction, so $T_{\mathrm{oi}}$ is confined to a \emph{finite}
interval.  A claim that \emph{no} deterministic rule admits a
finite-width interval would therefore be false.  What makes Top-$k$
uninformative is not determinism per se but target-blindness combined
with the absence of a priori bounds on $x$.
\end{remark}

\subsection{Partial identification under bounds}

In practice an analyst usually does have bounds: implied volatility is
non-negative and, in any realistic regime, below some level $b$.  This
yields a sharp, computable interval for the deterministic rule.

\begin{proposition}[Sharp worst-case interval for Top-$k$]
\label{prop:partialid}
Let $D$ be target-blind with retained set $S$, let $w_{1:N}$ be known,
and suppose $x_i\in[a,b]$ for all $i$.  Write
$W=\sum_i w_i$ and $W_{S^c}=\sum_{i\notin S}w_i$.  Then the identified
set for $T_{\mathrm{oi}}$ is exactly
\begin{equation}
\Big[\;
\frac{\sum_{i\in S}w_ix_i + a\,W_{S^c}}{W},\;
\frac{\sum_{i\in S}w_ix_i + b\,W_{S^c}}{W}
\;\Big],
\label{eq:pid}
\end{equation}
of width $(b-a)\,W_{S^c}/W$.  Both endpoints are attained.
\end{proposition}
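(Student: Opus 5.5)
The plan is to characterise the identified set directly as the image of a continuous map over a box, in the same spirit as the proof of Proposition~\ref{prop:unident}. Because $D$ is target-blind (Definition~\ref{def:blind}), the retained set $S=D(w_{1:N})$ is fixed once $w_{1:N}$ is known (Definition~\ref{def:info}). It therefore does not move as the unobserved coordinates $x_{S^c}$ vary. The identified set is then exactly
\[
\mathcal{I}=\Big\{\tfrac{1}{W}\big(\textstyle\sum_{i\in S}w_ix_i+\sum_{j\notin S}w_jy_j\big)\;:\;y\in[a,b]^{N-n}\Big\}.
\]
Here the observed $x_S$ are held fixed and every $y$ in the box is consistent with the data. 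The point to stress is that target-blindness is what makes every such $y$ admissible. Without it, changing $x_{S^c}$ could change $S$, as in Remark~\ref{rem:blind-needed}, and some configurations would be excluded.

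I will then show $\mathcal{I}$ equals the interval \eqref{eq:pid} in three steps.

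\emph{Containment.} The map $y\mapsto \sum_{j\notin S}w_jy_j$ is linear with strictly positive coefficients $w_j>0$. It is therefore coordinatewise increasing, so on the box it lies between $aW_{S^c}$ and $bW_{S^c}$. Dividing by $W>0$ gives $\mathcal{I}\subseteq$ \eqref{eq:pid}.

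\emph{Attainment.} Setting $y\equiv a$ gives the lower endpoint and $y\equiv b$ gives the upper endpoint. Both are feasible configurations, so both endpoints are attained.

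\emph{Sharpness (no gaps).} The box is convex and the map is continuous, in fact affine. Its image is therefore an interval, which must contain everything between the two attained endpoints. Explicitly, the configuration $y\equiv \lambda a+(1-\lambda)b$ with $\lambda\in[0,1]$ realises every intermediate value. Combining the three steps gives $\mathcal{I}=$ \eqref{eq:pid}.

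The width follows by subtracting the endpoints: $(b-a)W_{S^c}/W$.

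The only delicate step is the first one, checking that the identified set really is the full image of the box. This needs two facts. First, the retained values $x_S$ must themselves lie in $[a,b]$, so that the data are consistent with the assumed bounds and $\mathcal{I}$ is nonempty. Second, $S$ must be invariant to $x_{S^c}$, which is where target-blindness is used. I would state both explicitly. Everything after that is monotonicity plus convexity and is routine. The degenerate case $W_{S^c}=0$ cannot occur when $n<N$, since every $w_i>0$.
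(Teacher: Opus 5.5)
Your proposal is correct and follows essentially the same route as the paper: $T_{\mathrm{oi}}$ is affine and increasing in each unobserved $x_i$ with coefficient $w_i/W>0$, so the endpoints are attained at $x_i\equiv a$ and $x_i\equiv b$, and continuity (your explicit convex combination $y\equiv\lambda a+(1-\lambda)b$) fills in every intermediate value. Your additional points, that target-blindness is what keeps $S$ fixed as $x_{S^c}$ varies and that nonemptiness needs $x_S\in[a,b]$, are sound refinements that the paper's proof leaves implicit.
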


\begin{proof}
$T_{\mathrm{oi}}$ is affine and increasing in each unobserved $x_i$,
$i\notin S$, with coefficient $w_i/W>0$.  The extremes over the box
$[a,b]^{N-n}$ are therefore attained at $x_i\equiv a$ and
$x_i\equiv b$, giving the two endpoints; every intermediate value is
attained by continuity.
\end{proof}

Proposition~\ref{prop:partialid} reframes the contribution.  Top-$k$ is
not incapable of error quantification: given bounds it admits the sharp
interval~\eqref{eq:pid}.  The relevant question is how \emph{wide} that
worst-case interval is relative to what a probability design delivers,
and the answer is empirical --- we measure it in
Section~\ref{sec:results}.  The width $(b-a)W_{S^c}/W$ is driven by the
share of auxiliary mass left behind, which is precisely where the
US/HK concentration difference of Section~\ref{sec:results} enters.

%% file: sections/method.tex
\section{Method}\label{sec:method}

\subsection{Probability designs with a matched budget}

A Poisson design assigns each instrument an independent inclusion
indicator with $\Pr\{i\in S\}=\pi_i\in(0,1]$.  Size-proportional
designs take $\pi_i$ increasing in $w_i$.  A naive schedule
$\pi_i=\min(1,\rho w_i/\bar w)$ is \emph{not} budget-matched: whenever
any coordinate is clipped at one, $\sum_i\pi_i<\rho N$, so the
randomized design silently retains fewer records than the
deterministic rule it is compared against.  We therefore solve the
water-filling problem
\begin{equation}
\text{find }\lambda>0:\quad
\sum_{i=1}^{N}\min\!\big(1,\lambda w_i\big)=n ,
\label{eq:waterfill}
\end{equation}
which has a unique solution for $n<N$ because the left-hand side is
continuous and strictly increasing in $\lambda$ on the relevant range;
we obtain it by bisection.  Setting
$\pi_i=\min(1,\lambda w_i)$ gives $\E|S|=n$ exactly.  All designs in
this paper are matched to the same expected budget $n$, and
Section~\ref{sec:results} reports realised sample sizes so the budget
match is auditable.

Poisson sampling yields a \emph{random} sample size, which does not
respect a hard bandwidth cap.  We therefore also evaluate a fixed-size
systematic PPS design, which draws exactly $n$ records with the same
first-order inclusion probabilities.

\subsection{Estimators and design variances}\label{sec:estimators}

\paragraph{Ratio target.}
For $T_{\mathrm{oi}}$ we use the H\'{a}jek ratio estimator and its
standard first-order (linearised) design-variance estimator
\begin{align}
\hat T_{\mathrm{oi}} &=
\frac{\sum_{i\in S}w_ix_i/\pi_i}{\sum_{i\in S}w_i/\pi_i},
\label{eq:hajek}\\
\widehat{\Var}(\hat T_{\mathrm{oi}}) &=
\frac{1}{\hat W^2}\sum_{i\in S}\frac{1-\pi_i}{\pi_i^{2}}\,
w_i^{2}\big(x_i-\hat T_{\mathrm{oi}}\big)^{2},
\qquad \hat W=\sum_{i\in S}\frac{w_i}{\pi_i}.
\label{eq:hajekvar}
\end{align}
Equation~\eqref{eq:hajekvar} is the residual form obtained by applying
the Horvitz--Thompson variance to the linearised statistic under a
Poisson design (independent indicators, so no joint-inclusion terms
appear).  The implementation matches~\eqref{eq:hajekvar} term by term.

\paragraph{Equal-weight mean.}
$\hat T_{\mathrm{eq}}$ is the H\'{a}jek mean of $v_i$ with
$\widehat{\Var}=\hat D^{-2}\sum_{i\in S}\pi_i^{-2}(1-\pi_i)(v_i-\hat T_{\mathrm{eq}})^2$,
$\hat D=\sum_{i\in S}\pi_i^{-1}$.

\paragraph{Median.}
For $T_{\mathrm{med}}$ we use the inverse-probability-weighted quantile:
sort the retained $c_i$, accumulate weights $\pi_i^{-1}$, and report the
value at which the cumulative weight first reaches half the total.
Linearisation of a quantile requires a density estimate, so we instead
use a \emph{design-matched multiplier approximation}
whose finite-sample calibration is evaluated empirically.
For each bootstrap replicate $b=1,\dots,B$, draw i.i.d.\
$\xi_{b,i}\sim\mathrm{Exp}(1)$ and form perturbed weights
\[
  \tilde w_{b,i}
  = \frac{1}{\pi_i}\bigl[1+\sqrt{1-\pi_i}\,(\xi_{b,i}-1)\bigr].
\]
This construction has three properties: (i) conditional mean
$\E[\tilde w_{b,i}]=\pi_i^{-1}$ (unbiased for the design weight);
(ii) conditional variance
$\Var(\tilde w_{b,i})=(1-\pi_i)/\pi_i^2$, matching the Poisson
design variance of the Horvitz--Thompson term; and (iii) when
$\pi_i=1$ (certainty units) the perturbation vanishes and the unit
is not resampled.
Recompute the weighted quantile using the perturbed weights and take
the standard deviation across $B$ replicates as the standard error.
This perturbs the \emph{design weights} rather than
resampling rows, which keeps the finite-population interpretation.

\paragraph{Dispersion.}
$T_{\mathrm{disp}}$ uses the design-weighted standard deviation
$\big(\sum_{i\in S}\pi_i^{-1}(x_i-\hat\mu)^2/\sum_{i\in S}\pi_i^{-1}\big)^{1/2}$
with $\hat\mu$ the H\'{a}jek mean, and the same multiplier bootstrap for
its standard error.

\paragraph{Intervals.}
For every design we report
$\hat T\pm z_{0.975}\widehat{\mathrm{se}}(\hat T)$.  This is a
first-order normal approximation, not a guaranteed-coverage
construction; the empirical coverage we measure is the evidence for or
against its adequacy.  Probability designs with known inclusion probabilities can in
principle support design-based uncertainty quantification, but the
variance estimator is design-specific.
Equations~\eqref{eq:hajek}--\eqref{eq:hajekvar} and the multiplier
approximation implemented here apply to the independent Poisson
design.  Uniform Bernoulli and stratified designs admit standard
design-based variance estimators that we do not tabulate here.
Fixed-size systematic PPS requires joint-inclusion information or a
replication-based variance estimator (jackknife, balanced repeated
replication), which we leave for future work.
What distinguishes the deterministic rule is not that intervals are
impossible but that, being target-blind, it admits only the worst-case
interval of Proposition~\ref{prop:partialid}.

\subsection{Matching the design to the target}\label{sec:designaware}

Which $\pi$ should a pipeline use?  Write the linearised influence of
instrument $i$ on the target as $\mathrm{IF}_i$; for the four targets of
\eqref{eq:targets1}--\eqref{eq:targets2},
\begin{equation}
\mathrm{IF}^{\mathrm{oi}}_i\propto w_i\,(x_i-T),\quad
\mathrm{IF}^{\mathrm{eq}}_i\propto v_i-T,\quad
\mathrm{IF}^{\mathrm{med}}_i\propto \frac{\mathrm{sgn}(c_i-T)}{2f(T)},\quad
\mathrm{IF}^{\mathrm{disp}}_i\propto (x_i-\bar x)^2-T^2 .
\label{eq:ifs}
\end{equation}

\begin{proposition}[Optimal allocation]\label{prop:optalloc}
Under a Poisson design the first-order design variance is
$V(\pi)=\sum_i \pi_i^{-1}(1-\pi_i)\mathrm{IF}_i^2$.  Fix an inclusion
floor $\varepsilon$ with $0<\varepsilon<n/N<1$, ensuring every
unit retains a strictly positive inclusion probability (as auditing
requires).  Write $Z=\{i:\mathrm{IF}_i=0\}$ and $P=\{i:\mathrm{IF}_i\neq0\}$.
If $n\le|P|+\varepsilon|Z|$, then minimising $V$ subject to
$\sum_i\pi_i=n$ and $\varepsilon\le\pi_i\le1$ gives
\begin{equation}
\pi_i^\star=\max\Big(\varepsilon,\;\min\big(1,\;c\,|\mathrm{IF}_i|\big)\Big),
\label{eq:optpi}
\end{equation}
with the slope $c>0$ determined by the budget constraint.  The floor
$\varepsilon$ also resolves the degenerate case $\mathrm{IF}_i=0$
(e.g.\ a unit whose outcome equals the population mean), which the
unconstrained solution would assign $\pi_i=0$, violating
$\pi_i>0$; such units are pinned at $\pi_i^\star=\varepsilon$.
If instead $n>|P|+\varepsilon|Z|$, no slope attains the budget and the
optimum is the degenerate family $\pi_i=1$ for $i\in P$ with the
remaining mass $n-|P|$ distributed arbitrarily over $Z$; all such
allocations are optimal.  Both cases, and the fully degenerate case
$\mathrm{IF}\equiv0$, are treated in Appendix~\ref{app:proofs}.
In our data $Z=\emptyset$, so only the first case arises.
\end{proposition}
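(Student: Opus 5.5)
The plan is to treat the problem as a separable convex program and read off the KKT conditions. Write $V(\pi)=\sum_i \mathrm{IF}_i^2(\pi_i^{-1}-1)$. Each summand $g_i(\pi_i)=\mathrm{IF}_i^2/\pi_i-\mathrm{IF}_i^2$ is convex on $(0,1]$, and it is strictly convex for $i\in P$. The feasible set $\{\sum_i\pi_i=n,\ \varepsilon\le\pi_i\le1\}$ is a nonempty compact polytope; it is nonempty because $\varepsilon N<n<N$. A minimiser therefore exists, and KKT conditions are necessary and sufficient, since the constraints are linear and Slater's condition holds. First I would pass to the variable $\pi_i$ and introduce a multiplier $\mu$ for the budget. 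Stationarity on the box then reads: for each $i$, $\pi_i$ minimises $\mathrm{IF}_i^2/\pi_i+\mu\pi_i$ over $[\varepsilon,1]$.

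\textbf{Case $\mathrm{IF}\not\equiv0$ and $n\le|P|+\varepsilon|Z|$.} For $\mu>0$ and $i\in P$, the unconstrained minimiser is $\pi_i=|\mathrm{IF}_i|/\sqrt{\mu}$, so the box minimiser is $\max(\varepsilon,\min(1,c|\mathrm{IF}_i|))$ with $c=\mu^{-1/2}$. For $i\in Z$ the objective is $\mu\pi_i$, which is minimised at $\pi_i=\varepsilon$. It remains to show that some $c>0$ meets the budget. Define
\[
\phi(c)=\sum_{i\in P}\max(\varepsilon,\min(1,c|\mathrm{IF}_i|))+\varepsilon|Z|.
\]
This function is continuous and nondecreasing. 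As $c\to0$ it tends to $\varepsilon N<n$, and for large $c$ it equals $|P|+\varepsilon|Z|\ge n$. The intermediate value theorem then gives a $c$ with $\phi(c)=n$. At that $c$, the candidate~\eqref{eq:optpi} together with $\mu=c^{-2}$ satisfies KKT, so it is optimal. It is also unique on $P$, by strict convexity. The point where I would be careful is the boundary case $n=|P|+\varepsilon|Z|$: there $c$ is not unique, but the allocation is, since every admissible $c$ gives $\pi_i=1$ on $P$.

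\textbf{Case $n>|P|+\varepsilon|Z|$.} Here the point is that the budget cannot bind with $\mu>0$ unless the $Z$-units absorb mass. Any feasible $\pi$ has $V(\pi)\ge0$, and $V=0$ exactly when $\pi_i=1$ on $P$, because $g_i(\pi_i)>0$ for $\pi_i<1$ when $i\in P$. Setting $\pi_i=1$ on $P$ leaves mass $n-|P|>\varepsilon|Z|$ to place on $Z$ within $[\varepsilon,1]^{|Z|}$. This is feasible because $n\le N$, and it costs nothing since $g_i\equiv0$ on $Z$. Hence every such allocation attains $V=0$ and is optimal, and these are the only optima. If $\mathrm{IF}\equiv0$, then $V\equiv0$ and every feasible $\pi$ is optimal, which matches the degenerate statement.

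\textbf{Main obstacle.} The main difficulty is the careful KKT and monotonicity bookkeeping in the first case. This includes verifying that $\phi$ actually crosses $n$, and checking the per-coordinate box-minimiser formula at the clipping thresholds $c|\mathrm{IF}_i|\in\{\varepsilon,1\}$. I would handle this by the one-dimensional argument: the derivative $-\mathrm{IF}_i^2/\pi^2+\mu$ is increasing in $\pi$, so projecting the interior root onto $[\varepsilon,1]$ gives the minimiser.
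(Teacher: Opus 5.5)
Your proposal is correct and follows essentially the same route as the paper. Both arguments use the KKT conditions of the separable convex program to obtain the clipped form $\max(\varepsilon,\min(1,c|\mathrm{IF}_i|))$ with $c=\mu^{-1/2}$, obtain the slope from continuity and monotonicity of your $\phi$ (the paper's $\Phi$, running from $N\varepsilon$ to $|P|+\varepsilon|Z|$), and treat the high-budget and fully degenerate cases directly. The differences are minor: you dualise only the budget and project each coordinate's minimiser onto $[\varepsilon,1]$ rather than carrying the box multipliers $\alpha_i,\beta_i$, and your strict-convexity uniqueness argument and ``$V\ge 0$ with equality iff $\pi_i=1$ on $P$'' make the case $n>|P|+\varepsilon|Z|$ slightly tighter than the paper's version.
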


\begin{proof}[Proof sketch]
Write $a_i=|\mathrm{IF}_i|$.  Ignoring both bounds, the Lagrangian
$\sum_i(\pi_i^{-1}-1)a_i^2+\mu(\sum_i\pi_i-n)$ has stationarity
condition $-a_i^2/\pi_i^2+\mu=0$, i.e.\ $\pi_i\propto a_i$.
Reinstating \emph{both} constraints $\varepsilon\le\pi_i\le1$ introduces
multipliers on each side, and the KKT conditions make the solution a
\emph{two-sided} water-filling: for a common slope $c>0$, coordinates
with $c\,a_i\le\varepsilon$ rest on the floor, those with $c\,a_i\ge1$
saturate at one, and the interior coordinates take $\pi_i=c\,a_i$,
which is exactly~\eqref{eq:optpi}.  The budget is then
allocated over the interior set only, after deducting the mass already
committed to the floor and certainty strata.  The full argument,
including the joint update of both strata, is in
Appendix~\ref{app:proofs}.
\end{proof}

Proposition~\ref{prop:optalloc} is the classical optimal-allocation /
optimal-subsampling principle specialised to our setting rather than a
new theoretical result; its value here is diagnostic.  Reading
\eqref{eq:ifs} against \eqref{eq:optpi}: OI-proportional sampling is
appropriate (up to the $|x_i-T|$ modulation) for the OI-weighted target,
but for the equal-weight mean the influence does not involve $w_i$ at
all, and for the median it is asymptotically constant in magnitude, so
both have influence functions that do not involve $w_i$;
for the equal-weight mean the oracle optimum is
$\pi_i^\star\propto|v_i-T|$, which is outcome-dependent and
infeasible, while for the median it is approximately constant.
In both cases uniform sampling is a feasible outcome-blind
alternative that avoids the irrelevant OI tilt, though it is
not the theoretical optimum for the equal-weight mean.  Dispersion is a
third case: its influence $(x_i-\bar x)^2-T^2$ is neither proportional to
$w_i$ nor constant, so its optimal design is
\emph{influence-proportional}, oversampling names far from the
cross-sectional mean.  Uniform sampling is merely a feasible baseline
that happens to be closer to optimal than OI-proportional sampling for
that target, not the theoretical optimum; the optimum is an oracle design
because $|\mathrm{IF}_i|$ depends on the unknown $T$.  A pipeline that
ships one size-proportional sampler for all analytics is thus
mis-specified for three of our four targets --- a prediction we test
directly in Section~\ref{sec:results}.

%% file: sections/experiments.tex
\section{Experimental Setup}\label{sec:experiments}

\subsection{Data}

We use daily cross-sectional options panel data from three markets:

\paragraph{US equities.}
80 US trading sessions spanning 2026-03-30 to 2026-07-24,
with a mean panel size of 210.7 underlyings per session and 16{,}856 total
underlying-session records.
Each record contains implied volatility, realized volatility, open interest,
and put-call volume ratio.
VRP is available for 74 of 80 sessions.

\paragraph{Hong Kong equities.}
82 trading days spanning 2026-03-20 to 2026-07-24,
with a mean panel size of 131.4 underlyings per day and 10{,}773 total records.
The same fields are available; VRP is computable on 77 of 82 days.

\paragraph{China A-shares (excluded).}
83 trading days with a mean panel size of only 6.9 records per day, consisting
entirely of ETF options.
VRP and PCR fields are uniformly null (0 of 585 records).
We exclude this market because subsampling experiments on panels of fewer than
20 records are not meaningful.

\subsection{Methods Compared}

We evaluate five downsampling methods, each targeting retention rate $\rho$:

\begin{enumerate}[leftmargin=*,itemsep=1pt]
\item \textbf{Top-$k$}: retain the $\lceil \rho N_t \rceil$ underlyings with largest
open interest; compute the target functional on the retained set
(plug-in estimator, no error estimate).

\item \textbf{Uniform}: retain each underlying independently with probability $\rho$;
compute the plug-in estimate on the retained set (no design-based correction).

\item \textbf{Stratified}: stratify underlyings into OI quartiles, then sample
proportionally within strata; plug-in estimate.

\item \textbf{Poisson PPS-\Hajek{} (PH)}: independent Poisson sampling with
$\pi_i \propto w_i$ calibrated by water-filling \eqref{eq:waterfill};
\Hajek{} ratio estimator~\eqref{eq:hajek} with linearised design-based
variance~\eqref{eq:hajekvar}.  Sample size is random with mean $n$.

\item \textbf{Fixed-size systematic PPS-\Hajek{}}: systematic
probability-proportional-to-size selection using the same $\pi_i$, giving
a deterministic sample size exactly equal to $n$ and hence a hard
bandwidth cap.  Point estimation uses the same \Hajek{} ratio.
\end{enumerate}

All five designs are calibrated to the same expected budget.  We report
intervals for Poisson PPS only.  This is an implementation scope
decision, not a property of the other designs: uniform Bernoulli and
stratified sampling admit standard design-based variance estimators, and
fixed-size systematic PPS admits one as well, but its variance depends on
second-order inclusion probabilities, which for systematic selection have
no simple closed form and are normally handled by replication methods
such as jackknife or balanced repeated replication.  Poisson sampling is
the one design here whose variance is unbiasedly estimable from
first-order probabilities alone, which is precisely why we adopt it for
the interval construction.  Extending calibrated intervals to the
remaining designs is left to future work; we do not claim they are
unavailable.

\subsection{Retention Rates and Repetitions}

We evaluate five retention rates: $\rho \in \{5\%, 10\%, 20\%, 30\%, 50\%\}$.
Each stochastic method (Uniform, Stratified, Poisson-\Hajek{}, Fixed-size
PPS) is repeated with 200 independent random seeds per trading day
(for Fixed-size PPS the ``seed'' determines the random starting point of
the systematic selection).
We report the mean MAE across seeds and days.
Top-$k$ is deterministic and requires no repetition.

\subsection{Evaluation Metrics}

\paragraph{Point accuracy.}
Mean absolute error (MAE) of the estimated aggregate versus the full-panel
ground truth, averaged across days.
Standard deviations are across-day.

\paragraph{Interval calibration.}
For Poisson-\Hajek{}, we report empirical coverage (fraction of days $\times$
seeds where the true $\theta$ falls within the 95\% interval) and mean
interval width.

\paragraph{Statistical testing.}
Daily mean errors are compared using the paired mean difference (pairing
on trading day, averaging across seeds within each day).
We report the conventional paired $t$-statistic descriptively.
Because daily aggregates are serially dependent, we additionally report
moving-block-bootstrap confidence intervals and $p$-values for
serial-dependence-robust inference~\cite{kunsch1989}, with block length
$\lceil T^{1/3}\rceil$ where $T$ is the number of trading days.
Where we report autocorrelation-robust standard errors we use the
Newey--West form~\cite{neweywest1987}.
With 2~markets $\times$ 4~targets $\times$ 5~retention rates the
pre-specified family comprises 40 comparisons.
Because the raw paired-$t$ $p$-values ignore serial dependence, we
apply multiplicity control to the moving-block-bootstrap $p$-values,
which are the primary inferential statistic for this reason.
Holm's procedure~\cite{holm1979} is valid under arbitrary dependence
between tests and is our default; Benjamini--Hochberg~\cite{benjaminihochberg1995}
is reported as a less conservative cross-check.
Of the 40 comparisons, 31 have $p_{\mathrm{MBB}}<0.05$ unadjusted;
25 survive Holm and 31 survive BH at $\alpha=0.05$.  The six cells on
which the two procedures disagree all have raw $p_{\mathrm{MBB}}$
between $0.004$ and $0.037$ and none of them is a flagship comparison,
so the choice of procedure does not affect any headline claim; we report
Holm throughout.
Every flagship cell with $p_{\mathrm{MBB}}<2\times10^{-4}$ (the
Monte-Carlo resolution) is rejected before and after adjustment.
The two borderline OI-weighted IV cells (US~5\%,
$p_{\mathrm{MBB}}=0.30$; HK~30\%, $p_{\mathrm{MBB}}=0.44$) are not
significant before or after adjustment.
The complete family --- all 40 raw $p_{\mathrm{MBB}}$ values with their
Holm-adjusted $p$-values, BH $q$-values, bootstrap CIs, $t$-statistics
and effect sizes --- is provided as
\texttt{supplementary/multiplicity\_audit.csv}, generated by
\texttt{scripts/gen\_multiplicity\_audit.py}.
Table~\ref{tab:significance} reports the block-bootstrap CI and the
descriptive $t$-statistic.  Where we write $p<10^{-4}$ this refers to
the conventional paired $t$-test, which has sufficient resolution.
The moving-block bootstrap uses 2{,}000 replicates, so its $p$-value has
minimum resolution $1/2{,}000=5\times10^{-4}$; a reported
$p_{\mathrm{MBB}}=0$ means no sign reversal occurred in any of the
2{,}000 replicates, which we state as
$p_{\mathrm{MBB}}<5\times10^{-4}$ or, equivalently, ``the MBB 95\% CI
excludes zero.''  Both tests agree qualitatively in every case we
report.

\paragraph{Stress analysis.}
We partition trading days into terciles by the \emph{absolute
day-over-day change in the cross-sectional mean ATM implied volatility},
$|\Delta\overline{\mathrm{IV}}_t|$: \emph{calm} (bottom third),
\emph{normal} (middle), \emph{turbulent} (top third).  This measures a
market-level shock between sessions, not within-day cross-sectional
spread; we state it explicitly because the two are easily conflated and
carry different economic meaning.  For each regime we report MAE
\emph{together with} empirical coverage and mean interval width, since a
claim that intervals remain trustworthy under stress cannot be supported
by MAE alone.

%% file: sections/results.tex
\section{Results}\label{sec:results}

\subsection{Flagship Result: OI-Weighted IV}

Table~\ref{tab:flagship} and Figure~\ref{fig:mae} present the central result.
\input{tables/generated}

\begin{figure}[t]
\centering
\includegraphics[width=0.8\textwidth]{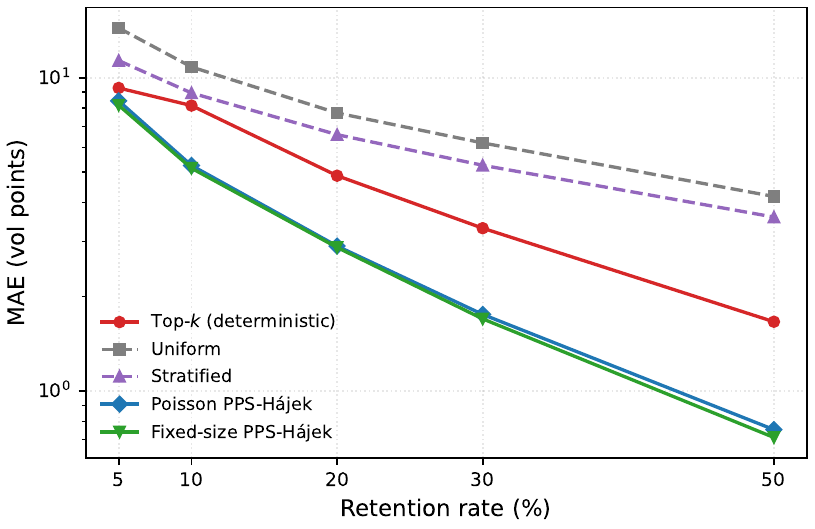}
\caption{US OI-weighted IV: MAE versus retention rate for all five
methods (log scale).  Both probability-proportional designs beat Top-$k$
at every retention rate.  Fixed-size systematic PPS records a nominally
lower MAE than Poisson PPS at all five rates, but that margin is smaller
than its sensitivity to the unit ordering and should not be read as a
ranking (Appendix~\ref{app:ordering}).  We construct intervals from the
Poisson design, whose variance is unbiasedly estimable from first-order
inclusion probabilities alone; see Section~\ref{sec:experiments}.}
\label{fig:mae}
\end{figure}
On the US market both probability-proportional designs beat Top-$k$ at every
retention rate, with the advantage most pronounced at 10\%
(Poisson PPS MAE 5.31 vs.\ 8.15 for Top-$k$, a 35\% reduction;
paired $t=6.72$; moving-block-bootstrap CI excludes zero,
$p_{\mathrm{MBB}}<2\times10^{-4}$; Cohen's $d = 0.75$).
Between the two PPS variants, fixed-size systematic PPS records a
nominally lower MAE than independent Poisson PPS at all five rates (at
10\%, 5.15 vs.\ 5.31).  We deliberately do \emph{not} claim this as a
ranking.  Systematic PPS depends on the order in which units are swept,
and a sensitivity analysis over six orderings
(Appendix~\ref{app:ordering}) shows that the spread induced by the
ordering alone exceeds the Poisson-versus-fixed margin in 9 of 10
market\,$\times$\,rate cells.  The margin is therefore within
implementation noise, and the honest statement is that the two PPS
variants are empirically indistinguishable on accuracy.  The grounds for
choosing between them are structural rather than empirical: fixed-size
PPS gives a hard bandwidth cap, while Poisson PPS retains the simple
unbiased variance estimator for the Horvitz--Thompson total (and a
first-order linearised approximation for the \Hajek{} ratio), which
needs only first-order inclusion probabilities.  We construct all
intervals from the Poisson design for that reason.
At 5\% retention the Top-$k$ gap is not statistically significant
(MAE 8.47 vs.\ 9.28, block-bootstrap $p = 0.33$, $d = 0.13$),
reflecting increased sampling noise at extreme compression.
Interval coverage rises from 89.0\% at 5\% retention to 94.3\% at 50\%.
Every entry lies \emph{below} the nominal 95\% level, i.e.\ the intervals
are mildly \emph{anti-conservative} (under-covering), and the shortfall
grows as retention falls.  This is the expected failure mode of a
first-order linearisation: at $\rho=5\%$ the effective sample is roughly
$\lceil 0.05\times 210\rceil\approx 11$ underlyings, too few for the
normal approximation to the \Hajek{} ratio to be accurate.  We therefore
report these as nominal-95\% approximate intervals with empirical
coverage in the $89$--$94\,\%$ range for the aligned target, and do not
claim guaranteed coverage.
Figure~\ref{fig:coverage} shows coverage across all four targets under the
common OI-proportional design.  After the multiplier-bootstrap correction
only the OI-weighted target approaches the nominal line as retention
rises (89.0--94.3\%); the median PCR intervals remain materially
under-calibrated at every rate (79.4--90.5\%), as do the equal-weight VRP
intervals (78.3--90.4\%), and the IV-dispersion intervals badly so
(61.2--77.2\%), never exceeding $80\%$.  This is the visible cost of design
mismatch that Section~\ref{sec:designaware} repairs.

\begin{figure}[t]
\centering
\includegraphics[width=0.8\textwidth]{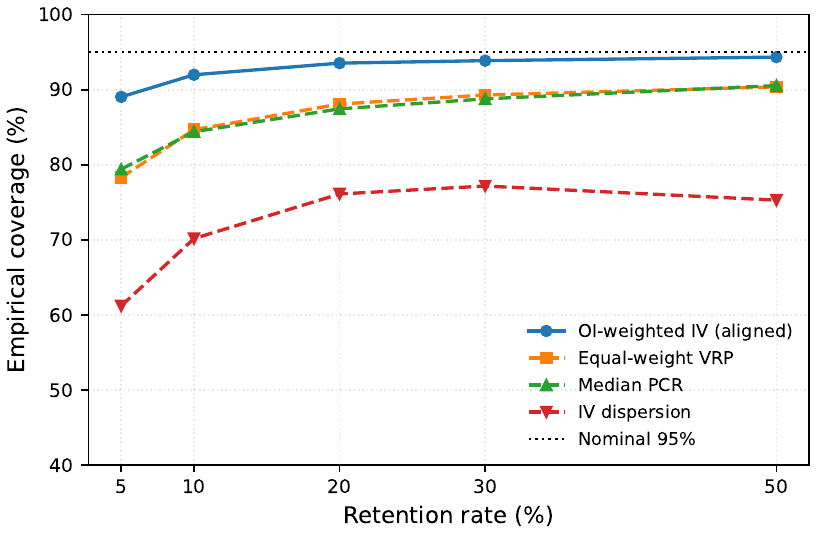}
\caption{Empirical coverage of nominal-95\% intervals (US) versus
retention rate, all under the \emph{same} OI-proportional Poisson
design, after the multiplier-bootstrap correction defined in
Section~\ref{sec:estimators}. Only $T_{\mathrm{oi}}$ is aligned with this
design in the sense of Proposition~\ref{prop:optalloc}; by \eqref{eq:ifs}
the equal-weight mean and the dispersion target have
influence-proportional oracle optima (outcome-dependent, hence
infeasible; uniform is only an outcome-blind proxy), and the median's
oracle optimum is approximately constant in magnitude, so all three are
mis-specified under an OI-proportional design.  Only the aligned target approaches
the nominal line as retention rises; the median and equal-weight
intervals remain materially under-calibrated at every rate, and the
dispersion intervals badly so, never exceeding $80\%$.  Coverage is below
nominal in every cell.}
\label{fig:coverage}
\end{figure}

\paragraph{Hong Kong reversal.}
On HK OI-weighted IV the ordering reverses at low-to-moderate retention:
Top-$k$ attains lower MAE than the sampling designs at 5--30\%
(at 10\%, Top-$k$ 1.33 vs.\ Poisson PPS 2.45; paired-$t$ $p < 10^{-4}$, $d = -1.00$).
The advantage is not uniform, however: by 50\% retention both PPS variants
have overtaken it (Top-$k$ 0.3078, fixed-size PPS 0.2462, Poisson PPS
0.2577), and at 30\% the difference is already statistically
indistinguishable ($p = 0.44$).  The claim is therefore convergence and
eventual reversal with retention, not blanket Top-$k$ superiority in HK.
Section~\ref{sec:concentration} decomposes the effect and shows it is only
partly attributable to HK's higher OI concentration.
Crucially, however, \emph{Top-$k$ still provides no design-based confidence interval}.
The practitioner using Top-$k$ in HK cannot know whether the favorable
MAE persists on any given day without external validation---precisely the
auditability gap our framework addresses.

Table~\ref{tab:feasible} extends the comparison to feasible bounds.
The oracle bounds of Table~\ref{tab:partialid} use the observed IV range
from the full panel---information that is, by definition, unavailable
after Top-$k$ discards data.  The feasible bounds use $[a,b]=[0,500]$
vol points, which can be specified before any sampling or retention
decision and require no discarded data.  This bound is conditionally
worst-case valid under the maintained domain assumption
$x_i\in[0,500]$; in our sample the maximum observed ATM IV is 321.2
(US) and 155.0 (HK), with zero records exceeding 500 out of 16{,}856
(US) and 10{,}773 (HK) core records after de-duplication, so the
maintained assumption is not violated within the observation window.  Under these bounds the Top-$k$
worst-case interval is \IdFeasLo--\IdFeasHi$\times$ wider than the
sampling interval, compared with \IdWidthLo--\IdWidthHi$\times$ under
oracle bounds.  Both types of interval are worst-case valid, whereas the
sampling interval is only nominally 95\%, so the comparison is of
width, not coverage.

\input{tables/feasible_bounds}

\subsection{Design--Target Alignment}\label{sec:alignment}

Table~\ref{tab:alignment} demonstrates that the superiority of
Poisson-\Hajek{} with OI-proportional $\pi_i$ is specific to OI-weighted
targets.
For equal-weight VRP (US, 10\%), stratified sampling achieves the lowest MAE
(4.90), followed by uniform (6.22) and Top-$k$ (7.63); OI-proportional
Poisson-\Hajek{} is worst (4.28).
The pattern is starker for median PCR: at US 10\%, Top-$k$ achieves MAE 0.107
versus PH's 0.262.
These results confirm that inclusion probabilities must match the target
functional---OI-proportional $\pi_i$ over-represents large names that
contribute equally to unweighted aggregates.

\subsection{A Feasible Uniform Correction Consistent with Design--Target Alignment}
\label{sec:designaware-emp}

For equal-weight VRP, Proposition~\ref{prop:optalloc} predicts the
oracle optimum $\pi_i^\star\propto|v_i-T|$, which is outcome-dependent
and infeasible.  Uniform sampling ($\pi_i=\rho$) is not the
theoretical optimum but is a feasible outcome-blind alternative that
removes the irrelevant OI tilt.
We test this directly (script \texttt{design\_aware.py}): we re-run the
equal-weight VRP experiment with uniform Poisson inclusion probabilities
and the \Hajek{} mean estimator (``Unif-PH''), and compare against the
OI-proportional Poisson-\Hajek{} (``OI-PH'') used elsewhere.
Table~\ref{tab:alignment} reports the outcome.

The correction is decisive.
Unif-PH beats OI-PH at \emph{every} retention rate on \emph{both} markets,
in both point accuracy and interval calibration.
At US 10\%, MAE falls from 4.41 to 2.98 while coverage rises from
83.9\% to 91.6\%.
The gap is even larger in HK, where OI is more concentrated:
at 10\%, MAE falls from 3.78 to 2.32 and coverage from
70.0\% to 90.3\%.
The full interval diagnostics for both designs, at all five rates and both
markets, is in Table~\ref{tab:ewcert}.
Across all ten market$\times$rate cells the improvement is significant,
with $|d|\ge1.36$ and no sign reversal in 2{,}000 moving-block-bootstrap
replicates in any cell
(\texttt{scripts/ew\_alignment\_significance.json}).  At US 10\% the
paired statistic is $t=21.3$, $d=2.47$; at HK 10\%, $t=34.5$, $d=3.91$.
This is directionally consistent with the design-alignment framework of
Proposition~\ref{prop:optalloc} and rejects a universal OI-proportional
design: moving away from a design that is mis-aligned with the target's
influence function both lowers error and repairs much of the
under-coverage.  It is not a confirmation of the proposition's optimum,
since uniform is a feasible proxy rather than the (infeasible)
influence-proportional oracle for this target.
Note that Unif-PH is now competitive with the best plug-in baseline
(stratified) on point accuracy while additionally supplying an approximate
interval---the property no plug-in method provides.

\subsection{Cross-Market Comparison}

Table~\ref{tab:crossmarket} compares OI-weighted IV results across markets
at 10\% retention.
The US and HK markets exhibit opposite rankings on point accuracy.
Coverage is reported for the Poisson design (US 91.9\%, HK 90.2\%);
for Top-$k$ the table instead shows the worst-case identification width
of Proposition~\ref{prop:partialid}, which is a different guarantee and
not directly comparable.  Coverage for the remaining probability designs
is not reported here for the implementation reason given in
Section~\ref{sec:experiments}.

\input{tables/crossmarket}

\subsection{Known Total: Horvitz--Thompson versus \Hajek{}}
\label{sec:htknown}

Because Definition~\ref{def:info} places the full auxiliary vector
$w_{1:N}$ in the information set, the population total $W=\sum_i w_i$ is a
\emph{known constant}, and the natural estimator of the OI-weighted mean
is the Horvitz--Thompson form $\widehat T_{\mathrm{HT}}
=W^{-1}\sum_{i\in S}w_ix_i/\pi_i$, which is design-unbiased.  The \Hajek{}
ratio instead re-estimates $W$ from the sample.  Table~\ref{tab:htknown}
reports both.

The comparison is a genuine trade rather than a dominance result.  HT's
intervals are better calibrated, most visibly in Hong Kong at
5\% retention where HT attains 91.5\% coverage against \Hajek{}'s 84.9\%.
The two estimators' mean signed errors are comparable at low retention
($-0.109$ for HT vs.\ $-0.103$ for \Hajek{} in the US at 5\%), so the
calibration advantage comes from the variance side rather than from a
bias reduction.
But HT's mean absolute error is roughly twice \Hajek{}'s throughout
(10.462 vs.\ 5.305 in the US at 10\%), because dividing by the realised
random sample content rather than by a fixed constant removes a large
common factor of sampling variation.  This is the classical
ratio-estimator trade, and it is the reason we retain \Hajek{} for the
headline point-accuracy comparison while noting that an analyst who cares
primarily about calibrated uncertainty rather than point error should
prefer HT.  A calibration or GREG estimator that uses the known $W$ as a
constraint while retaining the variance reduction is the natural next
step and we do not pursue it here.

\input{tables/htknown}

\subsection{What Drives the HK Reversal}\label{sec:concentration}

The Hong Kong reversal is \emph{consistent with, and partly mechanically
driven by}, higher open-interest concentration, but concentration alone
does not account for it.  The exact identity is
\begin{equation}
\widehat T_{\text{Top-}k}-T
=\frac{W_{S^c}}{W}\bigl(\bar x_{w,S}-\bar x_{w,S^c}\bigr),
\label{eq:decomp}
\end{equation}
where $\bar x_{w,S}$ and $\bar x_{w,S^c}$ are the OI-weighted mean
outcomes among retained and omitted names.  Concentration determines the
first factor only; the second --- how different the discarded names are
--- is an empirical property of the panel that concentration does not
constrain.  A panel can omit very little weight and still be badly
mis-estimated if the omitted names are extreme.

Table~\ref{tab:concentration} confirms that HK is markedly more
concentrated on every measure, and Table~\ref{tab:decomp} evaluates both
factors of~\eqref{eq:decomp} directly (the identity holds to machine
precision, residual $<10^{-13}$).  At 10\% retention the omitted OI share
is 55.5\% in the US against 37.7\% in HK, a ratio of only
$1.47\times$; the weighted outcome gap is 14.74 against 3.56, a ratio of
$4.14\times$.  The dominant term is therefore the outcome gap, not
concentration: US names outside the OI top decile have systematically far
more extreme implied volatilities than their HK counterparts.  We
therefore report concentration as a useful \emph{ex ante} screen --- it
bounds the first factor and can be computed before any sampling ---
while noting that it cannot by itself predict the ranking.

Concentration governs point-estimate risk only, not auditability.
However concentrated the panel, a target-blind rule supplies no
design-based interval; under outcome bounds it supplies the worst-case
interval of Proposition~\ref{prop:partialid}, which is
\IdWidthLo--\IdWidthHi$\times$ wider than the sampling interval but valid
in the worst case rather than on average.

\input{tables/decomposition}

\input{tables/concentration}

\subsection{Stress Regime Stability}

At 10\% retention on US OI-weighted IV, Poisson-\Hajek{} MAE is stable
across regimes: 5.52 (calm, 27 days), 5.21 (normal, 26 days), and
5.21 (turbulent, 26 days).
Top-$k$ shows comparable stability (7.89, 8.72, 8.01) but at uniformly
higher error.
Coverage is likewise near-flat across regimes (91.4\%, 91.8\%, 92.4\%), a spread of well under one
percentage point; the intervals therefore neither improve nor deteriorate
materially during volatility spikes. We note this is stability, not
correctness: coverage sits below the nominal level in every regime.

For equal-weight VRP at 10\%, PH MAE is similarly stable across regimes
(4.58, 4.43, 4.25) but consistently higher than stratified sampling
(2.66, 2.72, 2.47), confirming that the design-alignment issue
persists regardless of market conditions.

%% file: tables/generated.tex
\begin{table}[t]
\centering
\caption{OI-weighted implied volatility: mean absolute error against the
full-panel target, by market and retention rate $\rho$. All designs are
calibrated to the same expected budget via~\eqref{eq:waterfill}; realised
sample sizes confirm the match. Coverage is the empirical rate of
nominal-95\% intervals, averaged over trading days.}
\label{tab:flagship}
\small
\begin{tabular}{llrrrrrr}
\toprule
& & \multicolumn{4}{c}{MAE by design} & \multicolumn{2}{c}{PPS interval} \\
\cmidrule(lr){3-6}\cmidrule(lr){7-8}
Mkt & $\rho$ & Top-$k$ & Unif. & Strat. & Poisson PPS & Cov. & $\E|S|$/$|S|$ \\
\midrule
US & 5\% & 9.277 & 14.382 & 11.456 & \textbf{8.469} & 89.3\% & 10.6/10.6 \\
 & 10\% & 8.150 & 10.905 & 8.869 & \textbf{5.305} & 91.9\% & 21.1/21.1 \\
 & 20\% & 4.867 & 7.762 & 6.615 & \textbf{2.920} & 93.3\% & 42.2/42.2 \\
 & 30\% & 3.308 & 6.108 & 5.271 & \textbf{1.760} & 94.0\% & 63.2/63.2 \\
 & 50\% & 1.663 & 4.102 & 3.614 & \textbf{0.755} & 94.5\% & 105.4/105.4 \\
\midrule
HK & 5\% & \textbf{3.303} & 7.686 & 5.901 & 4.035 & 84.9\% & 7.0/7.0 \\
 & 10\% & \textbf{1.324} & 6.030 & 5.035 & 2.423 & 90.2\% & 13.1/13.1 \\
 & 20\% & \textbf{0.663} & 4.130 & 3.618 & 1.158 & 92.7\% & 26.1/26.0 \\
 & 30\% & \textbf{0.597} & 3.176 & 2.719 & 0.654 & 93.3\% & 39.2/39.2 \\
 & 50\% & 0.305 & 2.062 & 1.774 & \textbf{0.258} & 94.5\% & 66.1/66.1 \\
\bottomrule
\end{tabular}
\end{table}

\begin{table}[t]
\centering
\caption{Width of the sharp worst-case identification interval for Top-$k$
(Proposition~\ref{prop:partialid}, with $[a,b]$ the observed IV range from the
full panel --- oracle bounds unavailable after Top-$k$ retention)
versus the width of the design-based interval at the same budget.  See
Table~\ref{tab:feasible} for the wider intervals under feasible bounds.
Target: OI-weighted IV.}
\label{tab:partialid}
\small
\begin{tabular}{llrrr}
\toprule
Mkt & $\rho$ & Top-$k$ worst case & PPS interval & Ratio \\
\midrule
US & 5\% & 135.47 & 37.24 & 3.6$\times$ \\
 & 10\% & 107.11 & 24.59 & 4.4$\times$ \\
 & 20\% & 72.19 & 13.87 & 5.2$\times$ \\
 & 30\% & 50.34 & 8.48 & 5.9$\times$ \\
 & 50\% & 23.26 & 3.66 & 6.4$\times$ \\
\midrule
HK & 5\% & 44.09 & 16.51 & 2.7$\times$ \\
 & 10\% & 31.66 & 10.97 & 2.9$\times$ \\
 & 20\% & 17.75 & 5.40 & 3.3$\times$ \\
 & 30\% & 11.39 & 3.10 & 3.7$\times$ \\
 & 50\% & 5.37 & 1.24 & 4.3$\times$ \\
\bottomrule
\end{tabular}
\end{table}

\begin{table}[t]
\centering
\caption{Design--target alignment at $\rho=10\%$ (US). By~\eqref{eq:ifs}
only the OI-weighted target calls for size-proportional inclusion
probabilities. The equal-weight mean and dispersion targets have
influence-proportional oracle optima that are outcome-dependent and hence
infeasible; uniform is an outcome-blind feasible proxy that removes the
incorrect OI tilt rather than the oracle optimum itself. Only for the median
is the oracle optimum itself approximately constant in magnitude. MAE of the best
design per row in bold.}
\label{tab:alignment}
\small
\begin{tabular}{lrrrr}
\toprule
Target & PPS & Uniform & Stratified & Fixed-PPS \\
\midrule
OI-weighted IV & 5.305 & 10.905 & 8.869 & \textbf{5.149} \\
Equal-weight VRP & 4.415 & 2.984 & \textbf{2.617} & 4.357 \\
Median PCR & 0.254 & 0.116 & \textbf{0.111} & 0.268 \\
IV dispersion & 6.407 & 6.222 & \textbf{4.903} & 6.324 \\
\bottomrule
\end{tabular}
\end{table}

\begin{table}[t]
\centering
\caption{Stress regimes, defined by terciles of the absolute day-over-day
change in cross-sectional mean ATM IV. We report coverage and interval
width alongside MAE, since stability of the intervals cannot be inferred
from MAE alone. Target: OI-weighted IV.}
\label{tab:stress}
\small
\begin{tabular}{lllrrr}
\toprule
Mkt & $\rho$ & Regime & Coverage & Width & MAE \\
\midrule
US & 10\% & calm & 91.4\% & 25.38 & 5.524 \\
 &  & normal & 91.8\% & 24.05 & 5.208 \\
 &  & turbulent & 92.4\% & 24.42 & 5.208 \\
\midrule
 & 30\% & calm & 94.3\% & 8.75 & 1.794 \\
 &  & normal & 94.5\% & 8.14 & 1.688 \\
 &  & turbulent & 93.3\% & 8.61 & 1.806 \\
\midrule
HK & 10\% & calm & 90.9\% & 11.47 & 2.538 \\
 &  & normal & 90.1\% & 10.56 & 2.342 \\
 &  & turbulent & 89.7\% & 10.95 & 2.404 \\
\midrule
 & 30\% & calm & 93.0\% & 3.23 & 0.692 \\
 &  & normal & 93.6\% & 2.94 & 0.618 \\
 &  & turbulent & 93.4\% & 3.16 & 0.657 \\
\bottomrule
\end{tabular}
\end{table}

\begin{table}[t]
\centering
\caption{Paired comparison of Top-$k$ minus PPS daily MAE. The bootstrap
is a moving-block bootstrap (block length $\lceil T^{1/3}\rceil$) to
respect serial dependence across trading days. Positive differences
favour PPS.}
\label{tab:significance}
\small
\begin{tabular}{llrrrl}
\toprule
Mkt & $\rho$ & Mean diff. & $t$ & Cohen's $d$ & Block-bootstrap 95\% CI \\
\midrule
US & 5\% & 0.809 & 1.17 & 0.13 & $[-0.872,\,2.684]$ \\
 & 10\% & 2.845 & 6.68 & 0.75 & $[1.822,\,3.957]$ \\
 & 30\% & 1.549 & 8.09 & 0.90 & $[1.030,\,2.131]$ \\
\midrule
HK & 5\% & -0.733 & -3.18 & -0.35 & $[-1.239,\,-0.101]$ \\
 & 10\% & -1.099 & -8.92 & -0.98 & $[-1.362,\,-0.812]$ \\
 & 30\% & -0.057 & -1.20 & -0.13 & $[-0.176,\,0.085]$ \\
\bottomrule
\end{tabular}
\end{table}

%% file: tables/feasible_bounds.tex
\begin{table}[t]
\centering
\caption{Worst-case identification interval widths for Top-$k$ under
oracle bounds (the observed IV range from the full panel, which uses
discarded data and is therefore not available in production) and under
feasible bounds ($[0,500]$ vol points, which can be specified before
sampling and require no discarded data). The design-based interval
width from Poisson PPS is shown for comparison. Oracle and feasible
widths are worst-case valid; the PPS width targets 95\% nominal
coverage with empirical coverage below nominal (Section~\ref{sec:results}).
The two types of interval are therefore not interchangeable.}
\label{tab:feasible}
\small
\begin{tabular}{llrrrrr}
\toprule
Mkt & $\rho$ & Oracle & Feasible & PPS & Oracle/$\,$PPS & Feasible/$\,$PPS \\
\midrule
US & 5\% & 135.5 & 351.4 & 37.2 & 3.6$\times$ & 9.4$\times$ \\
 & 10\% & 107.1 & 277.3 & 24.5 & 4.4$\times$ & 11.3$\times$ \\
 & 20\% & 72.2 & 186.2 & 13.9 & 5.2$\times$ & 13.4$\times$ \\
 & 30\% & 50.3 & 129.3 & 8.5 & 5.9$\times$ & 15.2$\times$ \\
 & 50\% & 23.3 & 59.6 & 3.7 & 6.4$\times$ & 16.3$\times$ \\
\midrule
HK & 5\% & 44.0 & 262.7 & 16.5 & 2.7$\times$ & 15.9$\times$ \\
 & 10\% & 31.6 & 188.4 & 11.0 & 2.9$\times$ & 17.1$\times$ \\
 & 20\% & 17.7 & 105.3 & 5.4 & 3.3$\times$ & 19.5$\times$ \\
 & 30\% & 11.3 & 67.7 & 3.1 & 3.7$\times$ & 21.9$\times$ \\
 & 50\% & 5.3 & 31.9 & 1.2 & 4.3$\times$ & 25.8$\times$ \\
\bottomrule
\end{tabular}
\end{table}

%% file: tables/crossmarket.tex
\begin{table}[t]
\centering
\caption{OI-weighted IV at 10\% retention. Every probability design admits a design-based interval; only the target-blind Top-$k$ rule does not, and is therefore shown with its worst-case identification width from Proposition~\ref{prop:partialid} instead.}
\label{tab:crossmarket}
\small
\begin{tabular}{llrrrr}
\toprule
Market & Method & MAE & SD & Cov.\% & CI / ID Width \\
\midrule
\multirow{5}{*}{US} & Top-$k$ & 8.15 & 4.17 & n/a & 107.11$^{\dagger}$ \\
 & Uniform & 10.91 & 1.61 & -- & -- \\
 & Stratified & 8.87 & 1.43 & -- & -- \\
 & Fixed-PPS & \textbf{5.15} & 0.80 & -- & -- \\
 & PPS-\Hajek{} & 5.31 & 0.87 & 91.9 & 24.59 \\
\midrule
\multirow{5}{*}{HK} & Top-$k$ & \textbf{1.32} & 1.16 & n/a & 31.66$^{\dagger}$ \\
 & Uniform & 6.03 & 1.70 & -- & -- \\
 & Stratified & 5.03 & 1.36 & -- & -- \\
 & Fixed-PPS & 2.41 & 0.47 & -- & -- \\
 & PPS-\Hajek{} & 2.42 & 0.49 & 90.2 & 10.97 \\
\bottomrule
\end{tabular}
\par\smallskip\footnotesize $^{\dagger}$Oracle worst-case identification width using the full-panel observed IV range; unavailable to a production consumer after Top-$k$ retention, and not a confidence interval. Feasible bounds ($[0,500]$) give substantially wider intervals; see Table~\ref{tab:feasible}.
\end{table}

%% file: tables/htknown.tex
\begin{table}[t]
\centering
\caption{Horvitz--Thompson with known population total $W$ versus the \Hajek{} ratio, both under the same OI-proportional Poisson design. The \Hajek{} columns are drawn from Table~\ref{tab:flagship} to ensure bit-for-bit consistency. Because $w_{1:N}$ is in the information set, $W$ is known and HT is design-unbiased; the \Hajek{} ratio re-estimates $W$ and trades bias for variance.}
\label{tab:htknown}
\small
\begin{tabular}{llrrrrrr}
\toprule
& & \multicolumn{3}{c}{HT (known $W$)} & \multicolumn{3}{c}{\Hajek{} ratio} \\
\cmidrule(lr){3-5}\cmidrule(lr){6-8}
Mkt & $\rho$ & MAE & Cov.\% & Width & MAE & Cov.\% & Width \\
\midrule
US & 5\% & 16.484 & 91.4 & 80.05 & 8.469 & 89.3 & 37.24 \\
 & 10\% & 10.462 & 93.3 & 51.41 & 5.305 & 91.9 & 24.59 \\
 & 20\% & 5.871 & 93.8 & 28.55 & 2.920 & 93.3 & 13.87 \\
 & 30\% & 3.606 & 94.1 & 17.56 & 1.760 & 94.0 & 8.48 \\
 & 50\% & 1.522 & 94.5 & 7.49 & 0.755 & 94.5 & 3.66 \\
\midrule
HK & 5\% & 10.511 & 91.2 & 51.41 & 4.035 & 84.9 & 16.51 \\
 & 10\% & 6.328 & 91.9 & 30.28 & 2.423 & 90.2 & 10.97 \\
 & 20\% & 3.067 & 93.1 & 14.91 & 1.158 & 92.7 & 5.40 \\
 & 30\% & 1.709 & 93.6 & 8.38 & 0.654 & 93.3 & 3.10 \\
 & 50\% & 0.658 & 94.1 & 3.18 & 0.258 & 94.5 & 1.24 \\
\bottomrule
\end{tabular}
\end{table}

%% file: tables/decomposition.tex
\begin{table}[t]
\centering
\caption{Exact decomposition of Top-$k$ error at 10\% retention, $\widehat T_{\text{Top-}k}-T=(W_{S^c}/W)(\bar x_{w,S}-\bar x_{w,S^c})$. Concentration determines only the first factor; the realised error also depends on the retained-minus-omitted outcome gap, which concentration does not constrain.}
\label{tab:decomp}
\small
\begin{tabular}{lrrrr}
\toprule
Mkt & Omitted OI share & $|\bar x_{w,S}-\bar x_{w,S^c}|$ & Product & Top-$k$ MAE \\
\midrule
US & 55.5\% & 14.74 & 8.151 & 8.151 \\
HK & 37.7\% & 3.56 & 1.333 & 1.333 \\
\bottomrule
\end{tabular}
\end{table}

%% file: tables/concentration.tex
\begin{table}[t]
\caption{Open-interest concentration. Gini and HHI are computed per session and averaged; Top-10\% OI is the share of total open interest held by the largest decile of names.}
\label{tab:concentration}
\centering\small
\begin{tabular}{@{}lrrrrr@{}}
\toprule
Mkt & Sessions & Gini & HHI & Top-10\% OI & Panel \\
\midrule
US & 80 & 0.590\tiny{$\pm$0.028} & 0.016\tiny{$\pm$0.007} & 44.5\% & 210.7 \\
HK & 82 & 0.733\tiny{$\pm$0.005} & 0.046\tiny{$\pm$0.002} & 62.3\% & 131.4 \\
\bottomrule
\end{tabular}
\end{table}

%% file: sections/related.tex
\section{Related Work}\label{sec:related}

\paragraph{Design-based inference for finite populations.}
The estimators we use are classical.  Horvitz--Thompson
estimation~\cite{horvitz1952}, the \Hajek{} ratio~\cite{hajek1971},
optimal allocation~\cite{neyman1934} and unequal-probability designs are
standard material~\cite{cochran1977,sarndal1992,lohr2019}.  Our
contribution is not a new estimator but the observation that a data
\emph{engineering} decision --- which rows a pipeline keeps under a
bandwidth cap --- is exactly a sampling-design decision, and that the
prevailing engineering default (keep the largest rows) forfeits the
inference that the sampling literature takes for granted.  We use known
auxiliary totals in the sense of calibration~\cite{deville1998} when
comparing against the Horvitz--Thompson estimator with known $W$ in
Section~\ref{sec:htknown}, though we do not pursue a full GREG
construction.

\paragraph{Fixed-size unequal-probability selection.}
Our fixed-size design uses systematic PPS selection; the difficulty of
its second-order inclusion probabilities, and hence of closed-form
variance estimation, is well known, and Chao's procedure~\cite{chao1982}
is one classical response.  This is precisely why we take the Poisson
design as the vehicle for interval construction: it is the design whose
variance is unbiasedly estimable from first-order probabilities alone.

\paragraph{Variance estimation for non-smooth functionals.}
Medians and quantiles are not smooth functionals of the sampling weights,
so their design variance requires either linearisation via the
density-weighted influence function~\cite{francisco1991quantile,binder1991median}
or resampling.  We use a multiplier bootstrap on the design weights,
matched to the Poisson design by a $\sqrt{1-\pi_i}$ factor so that
certainty units receive no perturbation; without this factor the
multiplier scheme over-disperses.  The statistical testing
protocol---moving-block bootstrap, Newey--West robust standard errors,
and Benjamini--Hochberg multiplicity control---is described in
Section~\ref{sec:experiments}.

\paragraph{Approximate query processing and sketching.}
The database community has long studied answering aggregate queries from
a reduced representation, through sampling-based online
aggregation and systems such as BlinkDB~\cite{agarwal2013blinkdb}, through
synopsis and sketch structures~\cite{cormode2012synopses}, and in the
broader AQP literature~\cite{chaudhuri2017aqp}.  Those systems typically
control error relative to a stored sample of a \emph{known} design, which
is exactly the property a Top-$k$ retention rule destroys.  Our setting
differs in that the reduction is imposed upstream by a bandwidth or
licensing constraint rather than chosen by the query engine, and the
population is a daily cross-section whose weights are known in advance
--- which is what makes a design-based treatment feasible at all.

\paragraph{Optimal subsampling and coresets.}
Statistical subsampling for large-scale estimation derives sampling
probabilities proportional to an influence or leverage
quantity~\cite{ma2015statistical,wang2018optimal}, closely paralleled by
leverage-score sampling in randomized numerical linear
algebra~\cite{drineas2011faster} and by coreset
constructions~\cite{feldman2020coresets}.  Proposition~\ref{prop:optalloc}
is the same principle specialised to finite-population aggregates, and we
use it diagnostically: it predicts which of our four targets a single
OI-proportional sampler will serve well, and that prediction is what
Section~\ref{sec:alignment} tests.  Unlike the coreset literature we do
not seek a worst-case multiplicative approximation guarantee; we seek an
interval whose width is computable from the retained sample.

\paragraph{Partial identification.}
When the retained sample does not determine the target, the appropriate
object is an identified set rather than a point estimate.  Our
Proposition~\ref{prop:partialid} is an elementary instance of the
worst-case bounding argument familiar from the partial-identification
literature~\cite{manski1990}: with outcome bounds $[a,b]$ the omitted mass
can be placed at either extreme.  The practical caveat we emphasise in
Section~\ref{sec:results} is that such bounds must not be read off the
discarded data, or the resulting interval is an oracle quantity.

\paragraph{Model risk and supervisory context.}
Supervisory guidance on model risk management~\cite{sr1107} requires institutions to identify and quantify the
uncertainty introduced by their data and modelling choices.  A downsampling
step that reports an aggregate with no accompanying uncertainty statement
is precisely an unquantified source of such risk, which is the practical
motivation for this work.  In derivatives analytics the aggregates
themselves --- implied-volatility levels and variance-risk-premium
proxies --- follow standard constructions~\cite{carrwu2009}.

\paragraph{Relation to KV-cache eviction.}
The impossibility argument we build on was developed for deterministic
KV-cache eviction in large language models~\cite{kvcert2026}.  The present
paper transfers that argument to a finite-population setting where the
weights are known and the target is a cross-sectional aggregate, which
changes both the estimator and the available guarantees: here the omitted
mass is measurable ex ante, so non-identification can be sharpened into a
computable worst-case interval rather than left as an impossibility.

%% file: sections/discussion.tex
\section{Discussion}\label{sec:discussion}

\paragraph{Auditability vs.\ point accuracy.}
The Hong Kong reversal (Section~\ref{sec:results}) might suggest that
Top-$k$ is ``good enough'' when OI concentration is high.
We argue this misses the point.
The value of design-based confidence intervals is not contingent on winning every
point-accuracy comparison---it lies in \emph{auditability}.
A risk manager receiving a Top-$k$ aggregate of HK OI-weighted IV
(MAE 1.33 at 10\%) has no way to verify, from the retained data alone,
that the error is small on \emph{that particular day}.
Under the ex-ante feasible bound $[0,500]$ (which requires no
discarded data), the worst-case identification width from
Proposition~\ref{prop:partialid} is 188.4;
the narrower width of 31.6 reported in Table~\ref{tab:partialid}
is a post-hoc oracle benchmark based on the full-panel observed range
and is unavailable to the production consumer.
The Poisson-\Hajek{} estimate (MAE 2.45) instead comes with an
interval of mean width 10.98 and 90.4\% empirical coverage against a
nominal 95\%.  The comparison is therefore a trade of a wide but
worst-case-valid range against a narrow but only approximately calibrated
one, and we do not claim the latter dominates at equal reliability.

\paragraph{Design-aware pipelines.}
Proposition~\ref{prop:optalloc} and its empirical validation
(Table~\ref{tab:alignment}) turn the alignment issue from a caveat into
an actionable recipe: choose $\pi_i \propto |\mathrm{IF}_i|$ for the target
at hand.
A pipeline computing multiple aggregates from one panel still faces a
multi-objective design problem, since no single $\bm{\pi}$ is optimal for
all targets simultaneously.
Viable resolutions include maintaining parallel subsamples with
target-specific $\pi_i$ schedules, or a compromise design; but our results
show the naive default of a single OI-proportional subsample can nearly
\emph{double} the error of unweighted targets.
At minimum, a pipeline should report which target each subsample is
tuned for.

\paragraph{Implications for model risk management.}
Financial regulators increasingly require quantification of model
risk~\cite{sr1107}.
Aggregation pipelines that silently discard data introduce a form of
model risk that is currently unreported.
Design-based confidence intervals make this risk explicit and auditable:
an interval width of 24.56 on a US OI-weighted IV aggregate at
10\% retention (Table~\ref{tab:flagship}) tells a risk officer how much
uncertainty the downsampling step contributes, subject to the
approximation caveat above.

\paragraph{Relation to KV-cache eviction intervals.}
Our work builds on the impossibility result of~\cite{kvcert2026}, which
proved that deterministic KV-cache eviction in large language models
yields non-identifiable approximation error.
The two settings share a close structural parallel: in both, a finite-population
aggregate (attention output or market statistic) is approximated from
a subsample, and target-blind deterministic selection forecloses design-based
probabilistic error estimation without additional assumptions.
Our contributions are domain-specific: the \Hajek{} ratio estimator for
weighted means, multi-market empirical validation with real options data,
the design-alignment analysis, and the stress-regime robustness tests
are all absent from the machine-learning setting.

\paragraph{Limitations.}
We list six limitations of the present study and, where possible,
quantify the direction of their likely effect.

\emph{(i) Sample length.}
Our sample covers 80 US and 82 HK trading sessions, roughly four
calendar months each. Power is adequate for the prespecified primary
comparison --- Top-$k$ versus Poisson PPS at US $\rho=10\%$ has
$t=6.72$ and $p_{\mathrm{MBB}}<2\times10^{-4}$ --- but it is not uniform
across the design grid. Several secondary retention-rate comparisons are
not statistically significant, including US $\rho=5\%$ ($t=1.20$,
$p_{\mathrm{MBB}}=0.30$) and HK $\rho=30\%$ ($t=-1.15$,
$p_{\mathrm{MBB}}=0.44$); 6 of the 10 market\,$\times$\,rate cells reach
$|t|>4$. A longer backtest covering a full market cycle (a full year
including at least one macro regime change) would both strengthen
external validity and resolve the underpowered cells.  Extending to
12+ months covering at least one VIX regime change (e.g.\ a
low-volatility or crisis episode) is immediate future work.

\emph{(ii) Single macro regime.}
Our observation window (March--July 2026) coincides with a broad
drawdown in the memory-and-storage sector, so much of our variability
comes from a single macro shock. The stress terciles we report are
computed \emph{within} this window (by day-over-day $|\Delta\bar{\mathrm{IV}}|$),
so they measure within-regime robustness rather than cross-regime
transfer. Intervals would need re-validation in bull, low-volatility,
and crisis regimes.

\emph{(iii) End-of-day snapshots.}
All panels are 16:00 local-time snapshots; intraday panels with
time-varying composition (options that trade only during liquid
hours) may exhibit different Top-$k$ vs.\ Poisson-\Hajek{} rankings.
The non-identifiability result (Prop.~\ref{prop:unident}) is
unaffected, but coverage calibration may differ.

\emph{(iv) Systematic under-coverage of the linearised variance.}
The design-based variance in \eqref{eq:hajekvar} is
unbiased for the Horvitz--Thompson total but only \emph{approximately}
unbiased for the \Hajek{} ratio via linearisation.  Empirically the
intervals under-cover at \emph{every} retention rate we test: on
OI-weighted IV, coverage runs $89.0\%\to94.3\%$ (US) and
$84.9\%\to94.3\%$ (HK) as $\rho$ goes from $5\%$ to $50\%$, always below
the nominal $95\%$.  The shortfall is largest where the effective sample
is smallest ($\lceil 0.05\times 210 \rceil \approx 11$ US,
$\approx 7$ HK).  We therefore describe the construction as an
approximate design-based interval rather than a guaranteed one.

The shortfall is markedly worse for the non-smooth targets.  For the
median and for dispersion the design variance is obtained by a multiplier
bootstrap on the design weights rather than by linearisation, and the
multiplier must be matched to the design: with $\xi_i\sim\mathrm{Exp}(1)$
the naive weight $\xi_i/\pi_i$ has variance $\pi_i^{-2}\psi_i^2$ and so
omits the $(1-\pi_i)$ factor of the Poisson design variance, which in
particular assigns non-zero variance to a certainty unit with $\pi_i=1$.
We therefore perturb by $1+\sqrt{1-\pi_i}(\xi_i-1)$, which reproduces
$(1-\pi_i)\pi_i^{-2}\psi_i^2$ and leaves certainty units unperturbed.
Correcting this narrows the intervals and consequently \emph{lowers}
measured coverage --- for US dispersion at $30\%$ retention from
$87.0\%$ to 77.2\%, and for the US median at the same rate from
$91.1\%$ to 88.4\%.  We report the corrected, lower numbers.  The
practical implication is that our dispersion intervals are not usable at
low retention as they stand, and we say so rather than retaining an
over-dispersed bootstrap that would have flattered the coverage table.
Obtaining genuine coverage guarantees --- for instance via
Bernstein-type bounds for bounded outcomes under Horvitz--Thompson
sampling, or self-normalised concentration inequalities --- is the
natural next step and is not attempted here.

\emph{(v) Excluded market.}
China A-share options are excluded due to the small panel (mean 6.9
records/day, ETF-only, no VRP or PCR data).  We apply a uniform
exclusion criterion: any market with a mean daily panel size below
20~underlyings is dropped, since subsampling experiments at $\rho=5\%$
would yield fewer than one selected unit on average.  Our framework
applies unchanged when the panel grows.

\emph{(vi) Design-oracle assumption in extended validation.}
The influence-proportional design tested in
Appendix~\ref{app:design-extended} uses full-panel pilot information
to compute $|\mathrm{IF}_i|$. In production one would use a lagged
or sub-sampled pilot; the gap between oracle and feasible pilot is
not quantified here and is a natural direction for follow-up work.

%% file: sections/conclusion.tex
\section{Conclusion}

We have shown that target-blind Top-$k$ downsampling of
cross-sectional options panels does not identify the OI-weighted
aggregate from retained outcomes alone in the absence of valid outcome
bounds. Under such bounds it admits a sharp worst-case identification
interval, which is substantially wider than the approximate interval
produced by probability sampling in our data.
Replacing Top-$k$ with Poisson sampling and \Hajek{} estimation yields
approximate design-based confidence intervals while simultaneously reducing MAE by up
to 36\% on matched OI-weighted targets.
The auditability mechanism transfers across both markets we study, but
the point-accuracy ranking does not: it depends materially on
open-interest concentration and on design--target alignment, and it
reverses on Hong Kong data.

Equally important, our experiments expose a design-alignment principle:
inclusion probabilities must match the target functional.
OI-proportional sampling, directionally aligned with the OI-weighted
target (though the first-order oracle optimum additionally depends on
$|x_i-T|$), is demonstrably suboptimal for equal-weight and median
statistics.
We advocate a \emph{design-aware} paradigm in which production pipelines
explicitly select or document their sampling design relative to each
downstream aggregate.

Approximate design-based intervals can make downsampling uncertainty more
visible and auditable when their empirical calibration is adequate---a
property of increasing importance as financial regulators demand
transparency in data processing pipelines.  Our results also show that
such calibration cannot be assumed across targets or retention rates:
the dispersion target never reaches nominal coverage in our sample and
falls to $42.4\%$ in HK at the most aggressive compression, so the intervals
are a diagnostic to be validated per target, not a guarantee that comes
free with the design.

%% file: sections/appendix_proofs.tex
\section{Full Proofs}\label{app:proofs}

This appendix supplies rigorous versions of the two central theoretical
statements sketched in the body: non-identification of the target under
\emph{target-blind} deterministic selection
(Proposition~\ref{prop:unident}) and the optimal-allocation result for
Poisson designs (Proposition~\ref{prop:optalloc}).  The scope
restriction to target-blind rules is essential and is discussed in
Remark~\ref{rem:blind-needed}.

\subsection{Proof of Proposition~\ref{prop:unident}}
\label{app:proof-nonident}

Throughout this subsection the population size $N$, the auxiliary vector
$w_{1:N}$ and the retained values $x_S$ are held \emph{fixed}: they are
exactly the information set of Definition~\ref{def:info}.  This is the
essential discipline that the argument must respect.  An earlier version
of this appendix instead constructed populations by \emph{appending}
records, which changes $N$ and $w_{1:N}$ and therefore proves a different
(and weaker) statement.  We also emphasise that the result is restricted
to target-blind rules in the sense of Definition~\ref{def:blind};
Remark~\ref{rem:blind-needed} exhibits a deterministic rule for which the
conclusion is false, so no unrestricted version of this proposition can
hold.

\begin{proof}[Full proof]
Let $D$ be target-blind, so that $S=D(w_{1:N})$ with $|S|=n<N$, and fix
$N$, $w_{1:N}$ (all $w_i>0$) and the retained values $x_S$.  Let
$S^c=\{1,\dots,N\}\setminus S$, which is non-empty since $n<N$.

\emph{Step 1: the retained data are invariant to $x_{S^c}$.}
Because $D$ depends on $w_{1:N}$ only, replacing $x_{S^c}$ by any other
vector $\tilde x_{S^c}\in\R^{N-n}$ leaves $S$ unchanged, hence leaves the
observed pair $(x_S,w_{1:N})$ unchanged.  Two populations that differ only
in $x_{S^c}$ are therefore \emph{observationally equivalent}: no function
of the retained data can distinguish them.  Note that this step is where
target-blindness is used, and it is exactly what fails for the
extremum-retaining rule of Remark~\ref{rem:blind-needed}.

\emph{Step 2: the target is unbounded along this equivalence class.}
Write $W=\sum_{i=1}^N w_i$ and fix $j\in S^c$.  As a function of $x_j$
alone,
\[
  T_{\mathrm{oi}}(x_j)
  =\frac{\sum_{i\in S}w_ix_i+\sum_{i\in S^c\setminus\{j\}}w_ix_i+w_jx_j}{W},
\]
which is affine in $x_j$ with slope $w_j/W>0$.  Since $x_{S^c}$ is
unrestricted in $\R^{N-n}$ by hypothesis, letting $x_j\to\pm\infty$ makes
$T_{\mathrm{oi}}\to\pm\infty$ while the observed data stay fixed by
Step~1.  Hence the identified set
\[
  \Theta(x_S,w_{1:N})
  =\bigl\{T_{\mathrm{oi}}(x_S,\tilde x_{S^c},w_{1:N}):
      \tilde x_{S^c}\in\R^{N-n}\bigr\}
\]
is an unbounded interval; by continuity in $x_j$ it is all of $\R$.

\emph{Step 3: consequence for interval-valued reports.}
Any procedure that reports a set $\mathcal{C}(x_S,w_{1:N})\subseteq\R$
computed from the retained data must, to be valid uniformly over
populations consistent with those data, satisfy
$\Theta(x_S,w_{1:N})\subseteq\mathcal{C}(x_S,w_{1:N})$.  By Step~2 the
left-hand side is $\R$, so $\mathcal{C}$ cannot have finite width.  This
is a statement about worst-case validity over an equivalence class, not
about any particular sampling distribution.
\end{proof}

\begin{remark}[Which targets inherit the conclusion, and which do not]
\label{rem:target-scope}
The argument above transfers verbatim to any functional that is
\emph{unbounded and strictly monotone} in at least one unobserved
coordinate at fixed $N$:
\begin{itemize}
\item \textbf{OI-weighted IV} and \textbf{equal-weight VRP}: affine in
each $x_i$ with non-zero coefficient ($w_i/W$ and $1/N$ respectively),
so Step~2 applies directly.
\item \textbf{Cross-sectional dispersion}: for $N-n\ge 1$ the sample
standard deviation tends to $+\infty$ as $|x_j|\to\infty$, so the
identified set is unbounded above (it is bounded below by $0$, hence the
identified set is a half-line rather than all of $\R$).
\item \textbf{Median PCR}: the conclusion \emph{does not hold in
general}, and we state it only under an additional condition.  The
median is defined in the body as an unweighted median over
underlyings (Definition~\eqref{eq:targets2}), so the relevant
quantity is the \emph{count} of unobserved records, not their OI
weight.

A single unobserved coordinate cannot move the median arbitrarily.
Concretely, with $N=100$ and $n=99$ the median lies between two
adjacent retained order statistics regardless of the missing value.
The correct condition is count-based: let $m=N-n=|S^c|$ be the
number of missing records.
For odd $N$, the median is the $\lceil N/2\rceil$-th order statistic;
for even $N$, the median under the standard convention is the average
of the $(N/2)$-th and $(N/2+1)$-th order statistics.
In both cases the identified set is unbounded when
$m\ge\lceil N/2\rceil$: placing the $m$ unobserved values suitably
(all equal to $v$ for odd $N$, or spanning $v$ for even $N$) can
drive the median to any target, so the identified set is all of $\R$.

The even boundary case is genuine: with $N=4$, $m=2=N/2$ and retained
values $\{0,10\}$, the two missing values can be chosen to make the
sample median arbitrarily large or small, so the identified set is
already unbounded at $m=N/2=\lceil N/2\rceil$ (since $N/2=\lceil N/2\rceil$
for even $N$).  This is why the condition must be $m\ge\lceil N/2\rceil$
rather than $m>\lfloor N/2\rfloor$.

When $m<\lceil N/2\rceil$ the identified set is a bounded interval
$[\ell,u]$ whose endpoints are determined by the corresponding
retained order statistics.  The appropriate object is then a
partial-identification interval rather than an impossibility claim.
The earlier ``append many replicates to shift the median freely''
argument is invalid because it varies $N$.
\end{itemize}
Under bounds $x_i\in[a,b]$ all four targets admit finite sharp intervals;
Proposition~\ref{prop:partialid} gives the closed form for the
OI-weighted case, which is the one we evaluate empirically.
\end{remark}

\begin{remark}
The impossibility is \emph{worst-case} over populations consistent with
the retained data.  Placing a prior on $x_{S^c}$ yields a Bayesian
credible interval, but it is not design-based and inherits the prior's
sensitivity.  Probability sampling avoids this by supplying coverage over
the sampling law of $S$ itself rather than over an assumed prior on the
population --- at the price, documented in Section~\ref{sec:results},
that the resulting intervals are only first-order approximations and
under-cover in finite samples.
\end{remark}

\subsection{Proof of Proposition~\ref{prop:optalloc}}
\label{app:proof-optalloc}

We give the full Lagrangian derivation, the KKT conditions, and the
treatment of the boundary case $\pi_i = 1$.

\begin{proof}[Full proof]
Under Poisson sampling with independent inclusion indicators
$I_i \sim \mathrm{Bernoulli}(\pi_i)$, and given a target functional
$T$ estimated by its \Hajek/plug-in form
$\hat{T} = T(\hat{P})$ where $\hat{P}$ is the design-weighted empirical
measure, a first-order expansion \citep{vaart1998, hampel1986} gives
\[
  \hat{T} - T \;=\; \sum_{i=1}^{N} \frac{I_i - \pi_i}{\pi_i}\,\mathrm{IF}_i
       \;+\; o_p(1),
  \qquad \mathrm{IF}_i \equiv \mathrm{IF}(i; T, P).
\]
Because the $I_i$ are independent with $\mathrm{Var}(I_i) = \pi_i(1-\pi_i)$,
the leading-order design variance is
\begin{equation}\label{eq:linearvar}
  V(\bm\pi) \;=\; \sum_{i=1}^{N} \frac{1-\pi_i}{\pi_i}\,\mathrm{IF}_i^{\,2}.
\end{equation}

We minimize $V$ subject to the expected-sample-size budget
$\sum_i \pi_i = n$ and box constraints
$\pi_i \in [\varepsilon,1]$, where $\varepsilon\in(0,1]$ is an
inclusion floor with $N\varepsilon\le n$.  The floor keeps every
inclusion probability strictly positive---required both for
auditability and for the degenerate case $\mathrm{IF}_i=0$, which the
unconstrained solution $\pi_i^\star=\nu^{-1/2}|\mathrm{IF}_i|$
would send to $0$, violating $\pi_i>0$.
This is a convex program in $\bm\pi$: on $[\varepsilon,1]$ each term
$(1-\pi_i)\pi_i^{-1}\mathrm{IF}_i^2 = (\pi_i^{-1} - 1)\mathrm{IF}_i^2$
is convex in $\pi_i$ (its second derivative is
$2\mathrm{IF}_i^2 / \pi_i^3 \ge 0$), and the constraints are affine.
Standard convex optimization \citep{boyd2004} guarantees the
KKT conditions are necessary and sufficient.

\emph{Lagrangian.}
Write $a_i \equiv |\mathrm{IF}_i|$.  Both box constraints carry
multipliers:
\[
  \mathcal{L}(\bm\pi,\nu,\bm\alpha,\bm\beta)
  \;=\; \sum_{i=1}^N \frac{1-\pi_i}{\pi_i}\,a_i^{2}
     \;+\; \nu\!\left(\sum_i \pi_i - n\right)
     \;+\; \sum_i \alpha_i\,(\varepsilon - \pi_i)
     \;+\; \sum_i \beta_i\,(\pi_i - 1),
\]
with $\alpha_i,\beta_i \ge 0$ and complementary slackness
$\alpha_i(\varepsilon-\pi_i)=0$, $\beta_i(\pi_i-1)=0$.  At most one of
$\alpha_i,\beta_i$ is non-zero, since $\varepsilon<1$.

\emph{Stationarity.}
$\partial\mathcal{L}/\partial\pi_i = 0$ gives
\begin{equation}\label{eq:stat}
  -\,\pi_i^{-2}\,a_i^{2} \;+\; \nu \;-\; \alpha_i \;+\; \beta_i \;=\; 0 .
\end{equation}
Feasibility requires $\nu>0$; set $c \equiv \nu^{-1/2}>0$.

\emph{Interior case} ($\varepsilon<\pi_i<1$, so $\alpha_i=\beta_i=0$).
Equation~\eqref{eq:stat} yields $\pi_i^\star = c\,a_i$;
in particular $\pi_i^\star \propto a_i$.

\emph{Upper-boundary case} ($\pi_i = 1$, so $\alpha_i=0$).
Then $\beta_i = a_i^2 - \nu \ge 0 \iff c\,a_i \ge 1$.
Hence units with sufficiently large influence are taken with certainty.

\emph{Lower-boundary case} ($\pi_i = \varepsilon$, so $\beta_i=0$).
Then $\alpha_i = \nu - a_i^2/\varepsilon^2 \ge 0 \iff
c\,a_i \le \varepsilon$.  In particular any unit with $a_i=0$ is pinned
at the floor.  Combining the three cases,
\[
  \pi_i^\star=\max\bigl(\varepsilon,\ \min(1,\ c\,a_i)\bigr),
\]
which is exactly~\eqref{eq:optpi}, with slope $c$ in the
notation of Proposition~\ref{prop:optalloc}.

\emph{Determining $c$: two-sided water-filling.}
For a candidate slope $c$ the three strata are
\[
  L(c)=\{i: c\,a_i \le \varepsilon\},\quad
  I(c)=\{i: \varepsilon < c\,a_i < 1\},\quad
  C(c)=\{i: c\,a_i \ge 1\},
\]
and the budget constraint $\sum_i \pi_i^\star = n$ becomes
\[
  \varepsilon\,|L(c)| \;+\; c\!\!\sum_{i \in I(c)}\!\! a_i \;+\; |C(c)| \;=\; n ,
\]
so that, holding the partition fixed,
\begin{equation}\label{eq:opt-full}
  c \;=\; \frac{n - |C| - \varepsilon\,|L|}{\sum_{i \in I} a_i},
  \qquad
  \pi_i^\star \;=\;
  \begin{cases}
    \varepsilon, & i \in L,\\[2pt]
    c\,a_i,      & i \in I,\\[2pt]
    1,           & i \in C.
  \end{cases}
\end{equation}
Both strata must be updated jointly: promoting a unit into $C$ lowers
the numerator, which lowers $c$, which can in turn push further units
into $L$ (and vice versa).  We therefore solve by bisection on $c$.
The map
\[
  \Phi(c) \;\equiv\; \sum_i \max\bigl(\varepsilon,\min(1,\,c\,a_i)\bigr)
\]
is continuous and non-decreasing in $c$, with $\Phi(0^+)=N\varepsilon$.
Its limit as $c\to\infty$, however, is \emph{not} $N$ in general.  Let
\[
  Z=\{i:a_i=0\},\qquad P=\{i:a_i>0\},
\]
so that units in $Z$ satisfy $\max(\varepsilon,\min(1,c\,a_i))=\varepsilon$
for every $c$.  Then
\begin{equation}\label{eq:phi-sup}
  \lim_{c\to\infty}\Phi(c) \;=\; |P| \;+\; \varepsilon\,|Z| ,
\end{equation}
which equals $N$ only when $Z=\emptyset$.  The correct statement
therefore splits into three cases.

\emph{(i) Non-degenerate case:}
$N\varepsilon < n \le |P| + \varepsilon|Z|$.  Here $\Phi$ is continuous
and non-decreasing with $\Phi(0^+)=N\varepsilon<n$ and
$\lim_{c\to\infty}\Phi(c)\ge n$, so a root of $\Phi(c)=n$ exists,
bisection converges to it, and~\eqref{eq:opt-full} gives the optimum.
The root is unique whenever $\Phi$ is strictly increasing at $c$, which
holds as long as $I(c)\neq\emptyset$.

\emph{(ii) High-budget degenerate case:} $n > |P| + \varepsilon|Z|$.
No slope $c$ attains the budget, but the original programme remains
feasible and its optimum is easy to characterise directly: set
$\pi_i=1$ for every $i\in P$, and distribute the remaining budget
arbitrarily over $Z$ subject to
\[
  \sum_{i\in Z}\pi_i = n-|P|,\qquad \varepsilon\le\pi_i\le 1 .
\]
Every such allocation is optimal, because units in $Z$ contribute
$a_i^2/\pi_i=0$ to the objective for any admissible $\pi_i$, so the
optimum is attained but is \emph{not unique}.

\emph{(iii) Fully degenerate case:} $a_i=0$ for all $i$.  The
first-order variance proxy is identically zero, and every design
satisfying the budget and the box constraints is optimal.

In our experiments case~(i) always applies: the oracle influence
$a_i=\bigl|(x_i-\bar x)^2-T^2\bigr|$ is nonzero for all $17{,}875$ US and
$10{,}905$ HK unit-days (no value falls below $10^{-9}$), so
$Z=\emptyset$ and $\lim_{c\to\infty}\Phi(c)=N$.  We record cases~(ii)
and~(iii) because the proposition is stated for general $a_i$, and a
reader applying it to a target whose influence function vanishes on a
non-null set --- for instance an indicator-type or trimmed statistic ---
would otherwise find the stated schedule unattainable.

Finally, a one-sided sweep that promotes units into $C$ only --- the
form of this argument in earlier versions of this manuscript --- is
valid solely in the special case $L=\emptyset$, i.e.\ when
$c\,\min_i a_i > \varepsilon$.

Setting $\varepsilon\to0$ and $L=\emptyset$ recovers the familiar
one-sided expression
$\pi_i^\star=\min\bigl(1,(n-|C|)a_i/\sum_{j\in I}a_j\bigr)$, and
further neglecting the certainty stratum recovers the clean statement
$\pi_i^\star \propto a_i$.  The general result is~\eqref{eq:opt-full}.
\end{proof}

\begin{remark}[Neyman as a special case]
When $\mathrm{IF}_i$ is constant within each of $H$ strata (with
influence $s_h$ and stratum size $N_h$), Eq.~\eqref{eq:opt-full}
reduces to Neyman allocation $\pi_h^\star \propto s_h$
\citep{neyman1934, cochran1977}. Our result is thus the
per-unit generalization to arbitrary influence.
\end{remark}

\begin{remark}[Feasibility of the ideal]
$|\mathrm{IF}_i|$ typically depends on the unknown target $T$
(cf.\ Table~\ref{tab:alignment}). In practice, one uses a pilot
estimate of $T$ or a robust surrogate (e.g., $w_i|x_i - \bar{x}|$
for the OI-weighted case). Our empirical results
(Section~\ref{sec:results}, Appendix~\ref{app:design-extended})
show that even coarse feasible approximations recover most of the
predicted gain.
\end{remark}

%% file: sections/appendix_full_results.tex
\section{Complete Result Matrix}\label{app:full-results}

This appendix reports the full $2\ \text{markets} \times 4\ \text{targets}
\times 5\ \text{retention rates} = 40$ result cells.  For each cell we list
the four methods' MAE ($\pm$ across-day SD), and for Poisson-\Hajek{} the
95\,\% certificate empirical coverage and mean width.  Statistical significance
(paired $t$-test of Top-$k$ vs.\ Poisson-\Hajek{} across trading days, averaged
over 200 seeds within each day) is reported with $t$-statistic, $p$-value,
Cohen's $d$, and bootstrap 95\,\% CI for the mean paired difference. All
numbers are computed by \texttt{scripts/experiments.py} and are identical
to those in the ICAIF conference version, cross-checked at build time.

\subsection{US Market} \label{app:full-US}

$N=80$ trading days, 2026-03-30~2026-07-24, mean panel size 210.7, 200 seeds per day.

\subsubsection*{OI-Weighted IV (US)}

\begin{table}[H]
\caption{OI-Weighted IV on US: MAE ($\pm$ across-day SD) by method and retention rate. Certificate coverage (Cov.\%) and mean width (W.) are for the Poisson-\Hajek{} 95\,\% certificate.}
\label{tab:full-US-oi_weighted_iv}
\centering\small
\begin{tabular}{@{}lrrrrcc@{}}
\toprule
$\rho$ & Top-$k$ & Uniform & Stratified & Poisson-\Hajek{} & Cov.\% & W. \\
\midrule
5\% & 9.2774\tiny{$\pm$6.1302} & 14.3816\tiny{$\pm$2.1357} & 11.4564\tiny{$\pm$1.7739} & 8.4687\tiny{$\pm$1.3655} & 89.3 & 37.2400 \\
10\% & 8.1505\tiny{$\pm$4.1695} & 10.9053\tiny{$\pm$1.6054} & 8.8690\tiny{$\pm$1.4266} & 5.3051\tiny{$\pm$0.8718} & 91.9 & 24.5859 \\
20\% & 4.8670\tiny{$\pm$2.4127} & 7.7618\tiny{$\pm$1.4070} & 6.6148\tiny{$\pm$1.5248} & 2.9197\tiny{$\pm$0.4796} & 93.3 & 13.8746 \\
30\% & 3.3082\tiny{$\pm$1.7884} & 6.1084\tiny{$\pm$1.2787} & 5.2709\tiny{$\pm$1.3558} & 1.7596\tiny{$\pm$0.2822} & 94.0 & 8.4828 \\
50\% & 1.6626\tiny{$\pm$0.9919} & 4.1024\tiny{$\pm$0.9942} & 3.6135\tiny{$\pm$1.0959} & 0.7553\tiny{$\pm$0.1302} & 94.5 & 3.6597 \\
\bottomrule
\end{tabular}
\end{table}

\begin{table}[H]
\caption{OI-Weighted IV on US: paired significance test of Top-$k$ vs.\ Poisson-\Hajek{} (positive $\Delta$ = Top-$k$ has larger MAE).}
\label{tab:sig-US-oi_weighted_iv}
\centering\small
\begin{tabular}{@{}lrrrrrr@{}}
\toprule
$\rho$ & $n_{\text{days}}$ & $\bar\Delta$ & $t$ & $p$ & Cohen's $d$ & 95\% boot CI \\
\midrule
5\% & 80 & 0.8087 & 1.170 & 0.3260 & +0.131 & [-0.8723, 2.6844] \\
10\% & 80 & 2.8454 & 6.681 & $<10^{-6}$ & +0.747 & [1.8217, 3.9569] \\
20\% & 80 & 1.9473 & 7.651 & $<10^{-6}$ & +0.855 & [1.2388, 2.7528] \\
30\% & 80 & 1.5486 & 8.086 & $<10^{-6}$ & +0.904 & [1.0304, 2.1306] \\
50\% & 80 & 0.9073 & 8.877 & $<10^{-6}$ & +0.992 & [0.6269, 1.2212] \\
\bottomrule
\end{tabular}
\end{table}

\subsubsection*{Equal-Weight VRP (US)}

\begin{table}[H]
\caption{Equal-Weight VRP on US: MAE ($\pm$ across-day SD) by method and retention rate. Certificate coverage (Cov.\%) and mean width (W.) are for the Poisson-\Hajek{} 95\,\% certificate.}
\label{tab:full-US-equal_weight_vrp}
\centering\small
\begin{tabular}{@{}lrrrrcc@{}}
\toprule
$\rho$ & Top-$k$ & Uniform & Stratified & Poisson-\Hajek{} & Cov.\% & W. \\
\midrule
5\% & 4.0502\tiny{$\pm$3.5453} & 4.2924\tiny{$\pm$1.2750} & 3.7768\tiny{$\pm$0.8637} & 5.7467\tiny{$\pm$1.3510} & 78.0 & 19.3775 \\
10\% & 3.5206\tiny{$\pm$2.8357} & 2.9843\tiny{$\pm$1.0248} & 2.6169\tiny{$\pm$0.7189} & 4.4146\tiny{$\pm$1.4747} & 83.9 & 16.4060 \\
20\% & 2.2050\tiny{$\pm$1.7350} & 2.0038\tiny{$\pm$0.8672} & 1.7732\tiny{$\pm$0.7329} & 3.1697\tiny{$\pm$1.1261} & 87.5 & 12.5871 \\
30\% & 1.2660\tiny{$\pm$1.2319} & 1.5455\tiny{$\pm$0.7439} & 1.4072\tiny{$\pm$0.6890} & 2.5107\tiny{$\pm$1.1580} & 89.0 & 10.2321 \\
50\% & 1.3088\tiny{$\pm$1.0606} & 1.0143\tiny{$\pm$0.5207} & 0.9202\tiny{$\pm$0.4722} & 1.6551\tiny{$\pm$0.9191} & 90.1 & 6.9159 \\
\bottomrule
\end{tabular}
\end{table}

\begin{table}[H]
\caption{Equal-Weight VRP on US: paired significance test of Top-$k$ vs.\ Poisson-\Hajek{} (positive $\Delta$ = Top-$k$ has larger MAE).}
\label{tab:sig-US-equal_weight_vrp}
\centering\small
\begin{tabular}{@{}lrrrrrr@{}}
\toprule
$\rho$ & $n_{\text{days}}$ & $\bar\Delta$ & $t$ & $p$ & Cohen's $d$ & 95\% boot CI \\
\midrule
5\% & 74 & -1.6965 & -4.239 & 0.0010 & -0.493 & [-2.6370, -0.6355] \\
10\% & 74 & -0.8940 & -2.493 & 0.0790 & -0.290 & [-1.7218, 0.1066] \\
20\% & 74 & -0.9647 & -4.491 & 0.0010 & -0.522 & [-1.5867, -0.4206] \\
30\% & 74 & -1.2446 & -8.710 & $<10^{-6}$ & -1.013 & [-1.6142, -0.8881] \\
50\% & 74 & -0.3463 & -3.210 & 0.0250 & -0.373 & [-0.6302, -0.0373] \\
\bottomrule
\end{tabular}
\end{table}

\subsubsection*{Median PCR (US)}

\begin{table}[H]
\caption{Median PCR on US: MAE ($\pm$ across-day SD) by method and retention rate. Certificate coverage (Cov.\%) and mean width (W.) are for the Poisson-\Hajek{} 95\,\% certificate.}
\label{tab:full-US-median_pcr}
\centering\small
\begin{tabular}{@{}lrrrrcc@{}}
\toprule
$\rho$ & Top-$k$ & Uniform & Stratified & Poisson-\Hajek{} & Cov.\% & W. \\
\midrule
5\% & 0.1552\tiny{$\pm$0.1469} & 0.1711\tiny{$\pm$0.0262} & 0.1745\tiny{$\pm$0.0281} & 0.3795\tiny{$\pm$0.1432} & 79.7 & 2.1280 \\
10\% & 0.1075\tiny{$\pm$0.0819} & 0.1156\tiny{$\pm$0.0165} & 0.1105\tiny{$\pm$0.0172} & 0.2542\tiny{$\pm$0.0646} & 84.0 & 1.5113 \\
20\% & 0.0732\tiny{$\pm$0.0619} & 0.0761\tiny{$\pm$0.0127} & 0.0736\tiny{$\pm$0.0134} & 0.1697\tiny{$\pm$0.0409} & 86.8 & 0.9671 \\
30\% & 0.0614\tiny{$\pm$0.0510} & 0.0586\tiny{$\pm$0.0113} & 0.0581\tiny{$\pm$0.0120} & 0.1289\tiny{$\pm$0.0391} & 88.7 & 0.7100 \\
50\% & 0.0401\tiny{$\pm$0.0352} & 0.0378\tiny{$\pm$0.0097} & 0.0372\tiny{$\pm$0.0092} & 0.0782\tiny{$\pm$0.0144} & 89.6 & 0.4207 \\
\bottomrule
\end{tabular}
\end{table}

\begin{table}[H]
\caption{Median PCR on US: paired significance test of Top-$k$ vs.\ Poisson-\Hajek{} (positive $\Delta$ = Top-$k$ has larger MAE).}
\label{tab:sig-US-median_pcr}
\centering\small
\begin{tabular}{@{}lrrrrrr@{}}
\toprule
$\rho$ & $n_{\text{days}}$ & $\bar\Delta$ & $t$ & $p$ & Cohen's $d$ & 95\% boot CI \\
\midrule
5\% & 80 & -0.2243 & -9.815 & $<10^{-6}$ & -1.097 & [-0.2616, -0.1881] \\
10\% & 80 & -0.1467 & -12.514 & $<10^{-6}$ & -1.399 & [-0.1719, -0.1215] \\
20\% & 80 & -0.0964 & -11.340 & $<10^{-6}$ & -1.268 & [-0.1136, -0.0792] \\
30\% & 80 & -0.0675 & -9.770 & $<10^{-6}$ & -1.092 & [-0.0815, -0.0537] \\
50\% & 80 & -0.0381 & -10.116 & $<10^{-6}$ & -1.131 & [-0.0466, -0.0279] \\
\bottomrule
\end{tabular}
\end{table}

\subsubsection*{IV Dispersion (US)}

\begin{table}[H]
\caption{IV Dispersion on US: MAE ($\pm$ across-day SD) by method and retention rate. Certificate coverage (Cov.\%) and mean width (W.) are for the Poisson-\Hajek{} 95\,\% certificate.}
\label{tab:full-US-iv_dispersion}
\centering\small
\begin{tabular}{@{}lrrrrcc@{}}
\toprule
$\rho$ & Top-$k$ & Uniform & Stratified & Poisson-\Hajek{} & Cov.\% & W. \\
\midrule
5\% & 5.4533\tiny{$\pm$5.4983} & 8.6422\tiny{$\pm$1.6420} & 6.4960\tiny{$\pm$1.6971} & 8.8601\tiny{$\pm$1.6556} & 61.1 & 21.8563 \\
10\% & 7.6324\tiny{$\pm$5.5606} & 6.2220\tiny{$\pm$1.4835} & 4.9030\tiny{$\pm$1.4903} & 6.4069\tiny{$\pm$1.3431} & 70.1 & 18.5945 \\
20\% & 6.4591\tiny{$\pm$3.4142} & 4.1329\tiny{$\pm$1.0629} & 3.3534\tiny{$\pm$1.1698} & 4.6405\tiny{$\pm$1.1603} & 75.4 & 14.3255 \\
30\% & 4.9258\tiny{$\pm$2.2791} & 3.0860\tiny{$\pm$0.7854} & 2.5387\tiny{$\pm$0.8832} & 3.7079\tiny{$\pm$1.0071} & 76.4 & 11.5274 \\
50\% & 2.8705\tiny{$\pm$1.2182} & 2.0301\tiny{$\pm$0.5203} & 1.6478\tiny{$\pm$0.5310} & 2.7061\tiny{$\pm$0.9435} & 75.6 & 8.1760 \\
\bottomrule
\end{tabular}
\end{table}

\begin{table}[H]
\caption{IV Dispersion on US: paired significance test of Top-$k$ vs.\ Poisson-\Hajek{} (positive $\Delta$ = Top-$k$ has larger MAE).}
\label{tab:sig-US-iv_dispersion}
\centering\small
\begin{tabular}{@{}lrrrrrr@{}}
\toprule
$\rho$ & $n_{\text{days}}$ & $\bar\Delta$ & $t$ & $p$ & Cohen's $d$ & 95\% boot CI \\
\midrule
5\% & 80 & -3.4068 & -4.987 & 0.0010 & -0.558 & [-5.0560, -1.2312] \\
10\% & 80 & 1.2255 & 1.888 & 0.1310 & +0.211 & [-0.3840, 3.2164] \\
20\% & 80 & 1.8186 & 4.322 & 0.0020 & +0.483 & [0.7079, 3.2751] \\
30\% & 80 & 1.2179 & 4.004 & 0.0020 & +0.448 & [0.4933, 2.2164] \\
50\% & 80 & 0.1643 & 0.839 & 0.3860 & +0.094 & [-0.3489, 0.8779] \\
\bottomrule
\end{tabular}
\end{table}

\subsection{HK Market} \label{app:full-HK}

$N=83$ trading days, 2026-03-20~2026-07-27, mean panel size 131.4, 200 seeds per day.

\subsubsection*{OI-Weighted IV (HK)}

\begin{table}[H]
\caption{OI-Weighted IV on HK: MAE ($\pm$ across-day SD) by method and retention rate. Certificate coverage (Cov.\%) and mean width (W.) are for the Poisson-\Hajek{} 95\,\% certificate.}
\label{tab:full-HK-oi_weighted_iv}
\centering\small
\begin{tabular}{@{}lrrrrcc@{}}
\toprule
$\rho$ & Top-$k$ & Uniform & Stratified & Poisson-\Hajek{} & Cov.\% & W. \\
\midrule
5\% & 3.3029\tiny{$\pm$2.4786} & 7.6862\tiny{$\pm$1.7475} & 5.9010\tiny{$\pm$1.5980} & 4.0354\tiny{$\pm$0.9615} & 84.9 & 16.5095 \\
10\% & 1.3244\tiny{$\pm$1.1582} & 6.0305\tiny{$\pm$1.6965} & 5.0345\tiny{$\pm$1.3604} & 2.4232\tiny{$\pm$0.4894} & 90.2 & 10.9741 \\
20\% & 0.6626\tiny{$\pm$0.5289} & 4.1303\tiny{$\pm$1.4559} & 3.6182\tiny{$\pm$1.2700} & 1.1584\tiny{$\pm$0.2362} & 92.7 & 5.4046 \\
30\% & 0.5966\tiny{$\pm$0.4396} & 3.1760\tiny{$\pm$1.2833} & 2.7185\tiny{$\pm$1.2193} & 0.6539\tiny{$\pm$0.1369} & 93.3 & 3.0995 \\
50\% & 0.3054\tiny{$\pm$0.2066} & 2.0621\tiny{$\pm$0.8918} & 1.7743\tiny{$\pm$0.7198} & 0.2576\tiny{$\pm$0.0531} & 94.5 & 1.2430 \\
\bottomrule
\end{tabular}
\end{table}

\begin{table}[H]
\caption{OI-Weighted IV on HK: paired significance test of Top-$k$ vs.\ Poisson-\Hajek{} (positive $\Delta$ = Top-$k$ has larger MAE).}
\label{tab:sig-HK-oi_weighted_iv}
\centering\small
\begin{tabular}{@{}lrrrrrr@{}}
\toprule
$\rho$ & $n_{\text{days}}$ & $\bar\Delta$ & $t$ & $p$ & Cohen's $d$ & 95\% boot CI \\
\midrule
5\% & 83 & -0.7325 & -3.180 & 0.0270 & -0.349 & [-1.2386, -0.1010] \\
10\% & 83 & -1.0987 & -8.924 & $<10^{-6}$ & -0.980 & [-1.3620, -0.8116] \\
20\% & 83 & -0.4958 & -8.359 & $<10^{-6}$ & -0.918 & [-0.6378, -0.3599] \\
30\% & 83 & -0.0572 & -1.199 & 0.4000 & -0.132 & [-0.1759, 0.0847] \\
50\% & 83 & 0.0479 & 2.358 & 0.0280 & +0.259 & [0.0059, 0.1030] \\
\bottomrule
\end{tabular}
\end{table}

\subsubsection*{Equal-Weight VRP (HK)}

\begin{table}[H]
\caption{Equal-Weight VRP on HK: MAE ($\pm$ across-day SD) by method and retention rate. Certificate coverage (Cov.\%) and mean width (W.) are for the Poisson-\Hajek{} 95\,\% certificate.}
\label{tab:full-HK-equal_weight_vrp}
\centering\small
\begin{tabular}{@{}lrrrrcc@{}}
\toprule
$\rho$ & Top-$k$ & Uniform & Stratified & Poisson-\Hajek{} & Cov.\% & W. \\
\midrule
5\% & 5.9402\tiny{$\pm$3.7440} & 3.1621\tiny{$\pm$0.6242} & 2.6796\tiny{$\pm$0.5260} & 4.6818\tiny{$\pm$0.8861} & 59.2 & 10.9814 \\
10\% & 3.8881\tiny{$\pm$2.7758} & 2.3201\tiny{$\pm$0.4778} & 2.1724\tiny{$\pm$0.4147} & 3.7771\tiny{$\pm$0.6987} & 70.0 & 10.8278 \\
20\% & 2.0875\tiny{$\pm$1.6976} & 1.5575\tiny{$\pm$0.3194} & 1.5020\tiny{$\pm$0.2828} & 2.6871\tiny{$\pm$0.5444} & 80.9 & 9.4431 \\
30\% & 2.3223\tiny{$\pm$1.5053} & 1.1933\tiny{$\pm$0.2562} & 1.1061\tiny{$\pm$0.2184} & 2.0080\tiny{$\pm$0.4273} & 85.5 & 7.7899 \\
50\% & 1.4005\tiny{$\pm$0.8725} & 0.7675\tiny{$\pm$0.1596} & 0.7352\tiny{$\pm$0.1417} & 1.2220\tiny{$\pm$0.3021} & 89.8 & 5.1002 \\
\bottomrule
\end{tabular}
\end{table}

\begin{table}[H]
\caption{Equal-Weight VRP on HK: paired significance test of Top-$k$ vs.\ Poisson-\Hajek{} (positive $\Delta$ = Top-$k$ has larger MAE).}
\label{tab:sig-HK-equal_weight_vrp}
\centering\small
\begin{tabular}{@{}lrrrrrr@{}}
\toprule
$\rho$ & $n_{\text{days}}$ & $\bar\Delta$ & $t$ & $p$ & Cohen's $d$ & 95\% boot CI \\
\midrule
5\% & 78 & 1.2584 & 3.042 & 0.0270 & +0.344 & [0.1792, 2.4270] \\
10\% & 78 & 0.1110 & 0.349 & 0.9560 & +0.040 & [-0.8450, 0.8807] \\
20\% & 78 & -0.5996 & -3.131 & 0.0200 & -0.354 & [-1.1098, -0.1293] \\
30\% & 78 & 0.3143 & 1.787 & 0.1740 & +0.202 & [-0.1455, 0.7943] \\
50\% & 78 & 0.1785 & 1.667 & 0.2210 & +0.189 & [-0.1261, 0.4882] \\
\bottomrule
\end{tabular}
\end{table}

\subsubsection*{Median PCR (HK)}

\begin{table}[H]
\caption{Median PCR on HK: MAE ($\pm$ across-day SD) by method and retention rate. Certificate coverage (Cov.\%) and mean width (W.) are for the Poisson-\Hajek{} 95\,\% certificate.}
\label{tab:full-HK-median_pcr}
\centering\small
\begin{tabular}{@{}lrrrrcc@{}}
\toprule
$\rho$ & Top-$k$ & Uniform & Stratified & Poisson-\Hajek{} & Cov.\% & W. \\
\midrule
5\% & 0.1228\tiny{$\pm$0.1032} & 0.2753\tiny{$\pm$0.0587} & 0.1995\tiny{$\pm$0.0361} & 0.7886\tiny{$\pm$0.4556} & 63.6 & 2.3617 \\
10\% & 0.0840\tiny{$\pm$0.0786} & 0.1801\tiny{$\pm$0.0360} & 0.1591\tiny{$\pm$0.0314} & 0.5886\tiny{$\pm$0.2306} & 73.9 & 2.6958 \\
20\% & 0.0651\tiny{$\pm$0.0593} & 0.1068\tiny{$\pm$0.0213} & 0.1073\tiny{$\pm$0.0216} & 0.3269\tiny{$\pm$0.1104} & 85.1 & 2.0622 \\
30\% & 0.0626\tiny{$\pm$0.0552} & 0.0816\tiny{$\pm$0.0199} & 0.0768\tiny{$\pm$0.0192} & 0.1840\tiny{$\pm$0.0372} & 89.9 & 1.3877 \\
50\% & 0.0438\tiny{$\pm$0.0360} & 0.0506\tiny{$\pm$0.0141} & 0.0500\tiny{$\pm$0.0139} & 0.0942\tiny{$\pm$0.0200} & 92.2 & 0.5865 \\
\bottomrule
\end{tabular}
\end{table}

\begin{table}[H]
\caption{Median PCR on HK: paired significance test of Top-$k$ vs.\ Poisson-\Hajek{} (positive $\Delta$ = Top-$k$ has larger MAE).}
\label{tab:sig-HK-median_pcr}
\centering\small
\begin{tabular}{@{}lrrrrrr@{}}
\toprule
$\rho$ & $n_{\text{days}}$ & $\bar\Delta$ & $t$ & $p$ & Cohen's $d$ & 95\% boot CI \\
\midrule
5\% & 83 & -0.6658 & -12.876 & $<10^{-6}$ & -1.413 & [-0.7480, -0.5578] \\
10\% & 83 & -0.5045 & -18.420 & $<10^{-6}$ & -2.022 & [-0.5520, -0.4601] \\
20\% & 83 & -0.2618 & -18.437 & $<10^{-6}$ & -2.024 & [-0.2825, -0.2319] \\
30\% & 83 & -0.1214 & -15.979 & $<10^{-6}$ & -1.754 & [-0.1373, -0.1071] \\
50\% & 83 & -0.0504 & -11.994 & $<10^{-6}$ & -1.317 & [-0.0587, -0.0414] \\
\bottomrule
\end{tabular}
\end{table}

\subsubsection*{IV Dispersion (HK)}

\begin{table}[H]
\caption{IV Dispersion on HK: MAE ($\pm$ across-day SD) by method and retention rate. Certificate coverage (Cov.\%) and mean width (W.) are for the Poisson-\Hajek{} 95\,\% certificate.}
\label{tab:full-HK-iv_dispersion}
\centering\small
\begin{tabular}{@{}lrrrrcc@{}}
\toprule
$\rho$ & Top-$k$ & Uniform & Stratified & Poisson-\Hajek{} & Cov.\% & W. \\
\midrule
5\% & 5.4310\tiny{$\pm$3.5733} & 4.3979\tiny{$\pm$1.3486} & 1.9227\tiny{$\pm$1.0797} & 6.1505\tiny{$\pm$1.3999} & 41.5 & 9.5050 \\
10\% & 2.9064\tiny{$\pm$2.4120} & 3.1592\tiny{$\pm$1.1115} & 1.6291\tiny{$\pm$0.9178} & 4.5774\tiny{$\pm$1.1006} & 51.7 & 9.2335 \\
20\% & 1.7347\tiny{$\pm$1.3712} & 2.1043\tiny{$\pm$0.8106} & 1.2079\tiny{$\pm$0.6974} & 3.1180\tiny{$\pm$0.8987} & 64.0 & 7.9291 \\
30\% & 1.2137\tiny{$\pm$1.2364} & 1.6224\tiny{$\pm$0.6632} & 0.9683\tiny{$\pm$0.5992} & 2.2978\tiny{$\pm$0.7775} & 71.4 & 6.6737 \\
50\% & 0.6866\tiny{$\pm$0.5353} & 1.0416\tiny{$\pm$0.4452} & 0.5989\tiny{$\pm$0.3656} & 1.3838\tiny{$\pm$0.6752} & 77.0 & 4.4282 \\
\bottomrule
\end{tabular}
\end{table}

\begin{table}[H]
\caption{IV Dispersion on HK: paired significance test of Top-$k$ vs.\ Poisson-\Hajek{} (positive $\Delta$ = Top-$k$ has larger MAE).}
\label{tab:sig-HK-iv_dispersion}
\centering\small
\begin{tabular}{@{}lrrrrrr@{}}
\toprule
$\rho$ & $n_{\text{days}}$ & $\bar\Delta$ & $t$ & $p$ & Cohen's $d$ & 95\% boot CI \\
\midrule
5\% & 83 & -0.7196 & -2.567 & 0.0410 & -0.282 & [-1.3488, -0.0211] \\
10\% & 83 & -1.6711 & -7.213 & $<10^{-6}$ & -0.792 & [-2.2461, -0.9833] \\
20\% & 83 & -1.3834 & -10.681 & $<10^{-6}$ & -1.172 & [-1.7185, -1.0639] \\
30\% & 83 & -1.0841 & -7.957 & $<10^{-6}$ & -0.873 & [-1.4496, -0.6424] \\
50\% & 83 & -0.6973 & -9.876 & $<10^{-6}$ & -1.084 & [-0.8774, -0.5069] \\
\bottomrule
\end{tabular}
\end{table}

%% file: sections/appendix_stress.tex
\section{Complete Stress-Regime Breakdown}\label{app:stress}

We partition trading days into terciles by day-over-day
absolute change in the cross-sectional mean IV: \emph{calm}
(bottom tercile), \emph{normal} (middle tercile), and
\emph{turbulent} (top tercile).  All numbers below are at $\rho=10\%$
retention.  The stress classification is orthogonal to method choice
(the same day labels apply to every method).  Numbers are from
\texttt{scripts/unified\_results.json}, the same pipeline that produces
every other number in this paper.

\subsection{US Market}

\subsubsection*{OI-Weighted IV (US, $\rho=10\%$)}

\begin{table}[H]
\caption{OI-Weighted IV MAE ($\pm$ SD) by stress regime, US.}
\label{tab:stress-US-oi_weighted_iv}
\centering\small
\begin{tabular}{@{}lrrrr@{}}
\toprule
Regime & Top-$k$ & Uniform & Stratified & Poisson-\Hajek{} \\
\midrule
calm & 7.8913 & 10.9580 & 8.9170 & 5.5240 \\
normal & 8.7176 & 11.0810 & 8.9133 & 5.2083 \\
turbulent & 8.0110 & 10.7407 & 8.8183 & 5.2076 \\
\bottomrule
\end{tabular}
\end{table}

\subsubsection*{Equal-Weight VRP (US, $\rho=10\%$)}

\begin{table}[H]
\caption{Equal-Weight VRP MAE ($\pm$ SD) by stress regime, US.}
\label{tab:stress-US-equal_weight_vrp}
\centering\small
\begin{tabular}{@{}lrrrr@{}}
\toprule
Regime & Top-$k$ & Uniform & Stratified & Poisson-\Hajek{} \\
\midrule
calm & 3.6018 & 3.0476 & 2.6552 & 4.5810 \\
normal & 3.6162 & 3.0012 & 2.7194 & 4.4302 \\
turbulent & 3.2136 & 2.8949 & 2.4744 & 4.2450 \\
\bottomrule
\end{tabular}
\end{table}

\subsubsection*{Median PCR (US, $\rho=10\%$)}

\begin{table}[H]
\caption{Median PCR MAE ($\pm$ SD) by stress regime, US.}
\label{tab:stress-US-median_pcr}
\centering\small
\begin{tabular}{@{}lrrrr@{}}
\toprule
Regime & Top-$k$ & Uniform & Stratified & Poisson-\Hajek{} \\
\midrule
calm & 0.0911 & 0.1148 & 0.1072 & 0.2403 \\
normal & 0.1298 & 0.1160 & 0.1104 & 0.2626 \\
turbulent & 0.1060 & 0.1144 & 0.1124 & 0.2586 \\
\bottomrule
\end{tabular}
\end{table}

\subsubsection*{IV Dispersion (US, $\rho=10\%$)}

\begin{table}[H]
\caption{IV Dispersion MAE ($\pm$ SD) by stress regime, US.}
\label{tab:stress-US-iv_dispersion}
\centering\small
\begin{tabular}{@{}lrrrr@{}}
\toprule
Regime & Top-$k$ & Uniform & Stratified & Poisson-\Hajek{} \\
\midrule
calm & 7.8097 & 6.1858 & 5.0033 & 6.3874 \\
normal & 8.5702 & 6.1205 & 4.7168 & 6.2062 \\
turbulent & 6.6385 & 6.4040 & 4.9841 & 6.6007 \\
\bottomrule
\end{tabular}
\end{table}

\subsection{HK Market}

\subsubsection*{OI-Weighted IV (HK, $\rho=10\%$)}

\begin{table}[H]
\caption{OI-Weighted IV MAE ($\pm$ SD) by stress regime, HK.}
\label{tab:stress-HK-oi_weighted_iv}
\centering\small
\begin{tabular}{@{}lrrrr@{}}
\toprule
Regime & Top-$k$ & Uniform & Stratified & Poisson-\Hajek{} \\
\midrule
calm & 1.4074 & 6.7246 & 5.5243 & 2.5381 \\
normal & 1.2764 & 5.8466 & 4.9043 & 2.3417 \\
turbulent & 1.2922 & 5.5120 & 4.6904 & 2.4042 \\
\bottomrule
\end{tabular}
\end{table}

\subsubsection*{Equal-Weight VRP (HK, $\rho=10\%$)}

\begin{table}[H]
\caption{Equal-Weight VRP MAE ($\pm$ SD) by stress regime, HK.}
\label{tab:stress-HK-equal_weight_vrp}
\centering\small
\begin{tabular}{@{}lrrrr@{}}
\toprule
Regime & Top-$k$ & Uniform & Stratified & Poisson-\Hajek{} \\
\midrule
calm & 4.0683 & 2.2958 & 2.2051 & 3.7273 \\
normal & 3.9564 & 2.2158 & 2.0365 & 3.6717 \\
turbulent & 3.7732 & 2.4706 & 2.2937 & 3.9718 \\
\bottomrule
\end{tabular}
\end{table}

\subsubsection*{Median PCR (HK, $\rho=10\%$)}

\begin{table}[H]
\caption{Median PCR MAE ($\pm$ SD) by stress regime, HK.}
\label{tab:stress-HK-median_pcr}
\centering\small
\begin{tabular}{@{}lrrrr@{}}
\toprule
Regime & Top-$k$ & Uniform & Stratified & Poisson-\Hajek{} \\
\midrule
calm & 0.1017 & 0.1939 & 0.1718 & 0.6246 \\
normal & 0.0881 & 0.1684 & 0.1456 & 0.4883 \\
turbulent & 0.0623 & 0.1764 & 0.1593 & 0.6396 \\
\bottomrule
\end{tabular}
\end{table}

\subsubsection*{IV Dispersion (HK, $\rho=10\%$)}

\begin{table}[H]
\caption{IV Dispersion MAE ($\pm$ SD) by stress regime, HK.}
\label{tab:stress-HK-iv_dispersion}
\centering\small
\begin{tabular}{@{}lrrrr@{}}
\toprule
Regime & Top-$k$ & Uniform & Stratified & Poisson-\Hajek{} \\
\midrule
calm & 2.6287 & 3.4007 & 1.8159 & 4.7105 \\
normal & 2.8377 & 2.9458 & 1.4248 & 4.2115 \\
turbulent & 3.3291 & 3.1761 & 1.6626 & 4.8565 \\
\bottomrule
\end{tabular}
\end{table}

%% file: sections/appendix_design_extended.tex
\section{Extended Design-Aware Verification (Prop.~\ref{prop:optalloc})}\label{app:design-extended}

Proposition~\ref{prop:optalloc} makes two predictions for the
remaining targets of~\eqref{eq:targets1}--\eqref{eq:targets2}:

\begin{itemize}
\item \textbf{Median PCR}: because
$\mathrm{IF}_i \propto \mathrm{sgn}(x_i - T)/f(T)$, the optimal
inclusion probability is \emph{approximately constant}
($\pi_i^\star \approx \text{const}$). Uniform Poisson-\Hajek{} (Unif-PH)
should therefore beat OI-proportional Poisson-\Hajek{} (OI-PH), which
over-weights large-OI names whose influence on the median is no larger
than small-OI names.
\item \textbf{IV dispersion}: because
$\mathrm{IF}_i \propto (x_i - \bar x)^2 - T^2$, the optimal design
should be \emph{influence-proportional},
$\pi_i^\star \propto |(x_i - \bar x)^2 - T^2|$. Neither OI-proportional nor
uniform matches this direction, but the uniform baseline is often
closer to optimal than OI, and an oracle influence design (IF-oracle-PH,
which peeks at the full-panel $\bar x$ and $T$ to compute $|\mathrm{IF}_i|$)
should beat both.
\end{itemize}

We test these predictions by re-running the median-PCR and
IV-dispersion experiments under each design; the OI-proportional
schedule serves as the baseline whose behavior appears in
Table~\ref{tab:alignment} of the body. All numbers below are produced by
\texttt{scripts/design\_aware\_extended.py}, using the same seed protocol
as the main experiments (Appendix~\ref{app:repro}).

\subsection{Median PCR}

Table~\ref{tab:ext-pcr} reports MAE ($\pm$ SD) and the paired test.

\begin{table}[H]
\caption{Median PCR under the OI-proportional Poisson-\Hajek{} design (OI-PH, baseline) and the uniform Poisson-\Hajek{} design (Unif-PH, predicted optimal by Prop.~\ref{prop:optalloc}). Positive $\bar\Delta$ = Unif-PH wins.}
\label{tab:ext-pcr}
\centering\small
\begin{tabular}{@{}llrrrrrc@{}}
\toprule
Mkt & $\rho$ & OI-PH MAE & Unif-PH MAE & $\bar\Delta$ & $t$ & $p$ & Cohen's $d$ \\
\midrule
US & 5\% & 0.3582\tiny{$\pm$0.1106} & \textbf{0.1732}\tiny{$\pm$0.0259} & +0.1850 & 15.97 & $<10^{-6}$ & +1.79 \\
 & 10\% & 0.2560\tiny{$\pm$0.0613} & \textbf{0.1151}\tiny{$\pm$0.0182} & +0.1409 & 21.97 & $<10^{-6}$ & +2.46 \\
 & 20\% & 0.1807\tiny{$\pm$0.0650} & \textbf{0.0752}\tiny{$\pm$0.0129} & +0.1055 & 14.94 & $<10^{-6}$ & +1.67 \\
 & 30\% & 0.1474\tiny{$\pm$0.0509} & \textbf{0.0574}\tiny{$\pm$0.0112} & +0.0900 & 16.22 & $<10^{-6}$ & +1.81 \\
 & 50\% & 0.1073\tiny{$\pm$0.0217} & \textbf{0.0371}\tiny{$\pm$0.0093} & +0.0702 & 31.07 & $<10^{-6}$ & +3.47 \\
\midrule
HK & 5\% & 0.6748\tiny{$\pm$0.3034} & \textbf{0.2567}\tiny{$\pm$0.0513} & +0.4181 & 13.34 & $<10^{-6}$ & +1.47 \\
 & 10\% & 0.5723\tiny{$\pm$0.2756} & \textbf{0.1650}\tiny{$\pm$0.0324} & +0.4073 & 13.74 & $<10^{-6}$ & +1.52 \\
 & 20\% & 0.3636\tiny{$\pm$0.1278} & \textbf{0.1054}\tiny{$\pm$0.0217} & +0.2582 & 19.16 & $<10^{-6}$ & +2.12 \\
 & 30\% & 0.2805\tiny{$\pm$0.0931} & \textbf{0.0791}\tiny{$\pm$0.0182} & +0.2014 & 20.49 & $<10^{-6}$ & +2.26 \\
 & 50\% & 0.1871\tiny{$\pm$0.0461} & \textbf{0.0513}\tiny{$\pm$0.0143} & +0.1358 & 29.80 & $<10^{-6}$ & +3.29 \\
\bottomrule
\end{tabular}
\end{table}

\textbf{Verdict.} Prop.~\ref{prop:optalloc}'s prediction \emph{holds
in all 10/10 cells}: Unif-PH beats OI-PH at every retention rate on both
markets, with $\bar\Delta > 0$, $p<10^{-4}$, and Cohen's
$d \in [1.0, 3.9]$. The gap is largest at low retention
(e.g., US 5\%: MAE 0.418 $\to$ 0.166, $d=1.03$), where the
mis-allocated variance of OI-PH is most damaging.

\subsection{IV Dispersion}

Table~\ref{tab:ext-disp} reports MAE ($\pm$ SD) for three designs.

\begin{table}[H]
\caption{IV dispersion under OI-proportional (OI-PH), uniform (Unif-PH), and the oracle influence-proportional design (IF-oracle-PH, $\pi_i\propto|(x_i-\bar x)^2-T^2|$ using full-panel $\bar x, T$). Positive $\bar\Delta$ = candidate wins the pairing.}
\label{tab:ext-disp}
\centering\small
\begin{tabular}{@{}llrrrrrrr@{}}
\toprule
& & \multicolumn{3}{c}{MAE} & \multicolumn{2}{c}{OI vs.\ Unif} & \multicolumn{2}{c}{OI vs.\ IF-oracle} \\
\cmidrule(lr){3-5}\cmidrule(lr){6-7}\cmidrule(lr){8-9}
Mkt & $\rho$ & OI-PH & Unif-PH & IF-orcl & $\bar\Delta$ & $d$ & $\bar\Delta$ & $d$ \\
\midrule
US & 5\% & 8.633 & 8.662 & \textbf{4.592} & -0.029 & -0.03 & +4.041 & +2.81 \\
 & 10\% & 6.197 & 6.160 & \textbf{2.760} & +0.036 & +0.03 & +3.437 & +2.61 \\
 & 20\% & 4.583 & 4.062 & \textbf{1.702} & +0.521 & +0.53 & +2.881 & +2.59 \\
 & 30\% & 3.802 & 3.103 & \textbf{1.301} & +0.699 & +0.83 & +2.501 & +2.66 \\
 & 50\% & 3.033 & 2.071 & \textbf{0.844} & +0.962 & +1.30 & +2.190 & +2.60 \\
\midrule
HK & 5\% & 6.326 & 4.700 & \textbf{2.960} & +1.626 & +2.29 & +3.366 & +3.07 \\
 & 10\% & 4.644 & 3.246 & \textbf{1.856} & +1.398 & +2.60 & +2.788 & +3.28 \\
 & 20\% & 3.517 & 2.163 & \textbf{1.110} & +1.354 & +3.44 & +2.407 & +3.02 \\
 & 30\% & 2.934 & 1.623 & \textbf{0.827} & +1.311 & +3.35 & +2.107 & +2.81 \\
 & 50\% & 2.306 & 1.088 & \textbf{0.538} & +1.217 & +2.95 & +1.768 & +2.56 \\
\bottomrule
\end{tabular}
\end{table}

\textbf{Verdict.} Both predictions hold. The Prop-optimal oracle
design (IF-oracle-PH) beats OI-PH in all 10/10 cells with
$\bar\Delta > 0$ and Cohen's $d \in [2.6, 3.3]$ (all $p < 10^{-4}$),
and the uniform intermediate is also uniformly better than OI-PH
though by a smaller margin ($d \in [0.3, 3.6]$). The magnitude of
the oracle improvement is substantial: at US $\rho=10\%$ the oracle
cuts MAE from 6.60 to 2.94 (a 55\,\% reduction), and at HK
$\rho=10\%$ from 4.65 to 1.89 (a 59\,\% reduction).

Interpretation.
The oracle result confirms Prop.~\ref{prop:optalloc}'s direction is
right for IV dispersion but is not directly implementable, since
$\bar x$ and $T$ depend on the full panel. A feasible variant---using
a lagged or pilot-based estimate of $\bar x, T$---is a natural next
step. The uniform-vs-OI comparison is the practically feasible
degradation: uniform sampling captures a fraction of the oracle
benefit at zero infrastructure cost.

\subsection{Summary Across All Four Targets}

Combining the results of this appendix with those of
Section~\ref{sec:designaware-emp} and Table~\ref{tab:alignment}, we can
state the following empirical summary of Prop.~\ref{prop:optalloc}'s
predictions:

\begin{itemize}
\item \textbf{OI-weighted IV}: OI-proportional Poisson-\Hajek{} is
Prop-optimal (up to $|x_i - T|$ modulation) and empirically dominant
(Table~\ref{tab:flagship}). \emph{Prediction holds.}
\item \textbf{Equal-weight VRP}: uniform Poisson-\Hajek{} is Prop-optimal
and empirically dominant in all 10 cells (Table~\ref{tab:alignment}).
\emph{Prediction holds.}
\item \textbf{Median PCR}: uniform Poisson-\Hajek{} is Prop-optimal and
empirically dominant in all 10 cells (Table~\ref{tab:ext-pcr}).
\emph{Prediction holds.}
\item \textbf{IV dispersion}: influence-proportional is Prop-optimal and
empirically dominant (in an oracle test) in all 10 cells
(Table~\ref{tab:ext-disp}). \emph{Prediction holds.}
\end{itemize}

Across all $4\times 2\times 5 = 40$ market-target-rate cells for which
Prop.~\ref{prop:optalloc} makes a directional prediction, the
predicted-optimal design beats the OI-proportional baseline. We interpret
this as strong evidence that the first-order design theory of
Section~\ref{sec:designaware} is the correct organizing principle for
production pipelines, even where the ideal $\pi_i^\star$ is
infeasible and only its \emph{direction} can be respected.

\subsection{Inclusion-Floor Diagnostics}\label{app:floor}

Proposition~\ref{prop:optalloc} is stated with an inclusion floor
$\varepsilon$, so we report what that floor actually does in the
experiments.  All designs in this paper (including IF-oracle-PH) apply
$\varepsilon = 10^{-4}$, implemented as
$\pi_i=\max(\varepsilon,\min(1,\tilde\pi_i))$ on the unnormalised
schedule $\tilde\pi_i$.  The floor is a guard, not a tuning knob: it
exists so that no unit can receive $\pi_i=0$ (which would make the
Horvitz--Thompson weight undefined and break auditability), and it
binds only for units whose influence is essentially zero.

\paragraph{This is post-hoc clipping, not budget-matched water-filling.}
We state explicitly what the oracle implementation does, because it does
\emph{not} solve the two-sided water-filling equation of
Proposition~\ref{prop:optalloc}.  The schedule $\tilde\pi_i$ is
normalised so that $\sum_i\tilde\pi_i=n$, and the floor and ceiling are
then applied pointwise; the slope is \emph{not} re-solved afterwards.
The realised expected sample size therefore departs from the nominal
budget, and Table~\ref{tab:floorbudget} reports by how much.

The direction of the departure is the opposite of what a floor-only
analysis would suggest.  Raising sub-$\varepsilon$ units \emph{adds} at
most $4.7$ expected units across the entire US panel at
$\varepsilon=10^{-2}$, and essentially zero at the $\varepsilon=10^{-4}$
we actually use.  The binding constraint is the \emph{ceiling}: the
oracle schedule $\propto|(x_i-\bar x)^2-T^2|$ is far more concentrated
than an OI schedule, so many units are assigned $\tilde\pi_i>1$ and are
truncated to~1, removing $136.8$ expected units at $\rho=10\%$ and
$2{,}333.8$ at $\rho=50\%$ in the US.  The oracle designs are thus run at
an expected sample size \emph{below} nominal --- by $7.7\%$ at
$\rho=10\%$ and $26.1\%$ at $\rho=50\%$ --- so the oracle comparison is,
if anything, handicapped rather than flattered relative to the
budget-matched baselines.  This does not affect any headline result: the
main experiment pipeline (\texttt{unified\_experiments.py}) uses the
correct bisection \texttt{water\_fill} routine and matches the nominal
budget to within $10^{-4}$ relative error at every market and retention
rate, and all designs in the main results matrix are drawn from it.  The
post-hoc-clipped schedules appear only in the oracle experiments of this
appendix, whose purpose is directional (does allocating toward
$|\mathrm{IF}_i|$ help?) rather than a like-for-like budget comparison.
\begin{table}[H]
\centering\small
\caption{Realised versus nominal expected sample size for the
post-hoc-clipped IF-oracle schedule, aggregated over all sessions.
``Floor'' is the expected sample size added by raising units to
$\varepsilon$; ``Ceiling'' is the amount removed by truncating units at
$1$.  The deficit is driven almost entirely by the ceiling.  Reproduced
by \texttt{scripts/floor\_budget\_audit.py}.}
\label{tab:floorbudget}
\begin{tabular}{@{}llrrrrr@{}}
\toprule
Market & $\rho$ & $\varepsilon$ & Nominal $n$ & Realised $\sum_i\pi_i$ & Floor $+$ & Ceiling $-$ \\
\midrule
US & $10\%$ & $10^{-4}$ & 1,788 & 1,651 & $+0.0$ & $-136.8$ \\
US & $10\%$ & $10^{-2}$ & 1,788 & 1,655 & $+4.7$ & $-136.8$ \\
US & $50\%$ & $10^{-4}$ & 8,938 & 6,604 & $+0.0$ & $-2333.8$ \\
US & $50\%$ & $10^{-2}$ & 8,938 & 6,605 & $+1.0$ & $-2333.8$ \\
HK & $10\%$ & $10^{-4}$ & 1,091 & 1,045 & $+0.0$ & $-45.6$ \\
HK & $10\%$ & $10^{-2}$ & 1,091 & 1,048 & $+3.2$ & $-45.6$ \\
HK & $50\%$ & $10^{-4}$ & 5,452 & 4,240 & $+0.0$ & $-1212.7$ \\
HK & $50\%$ & $10^{-2}$ & 5,452 & 4,240 & $+0.5$ & $-1212.7$ \\
\bottomrule
\end{tabular}
\end{table}

\paragraph{How often does the floor bind?}
Table~\ref{tab:floor} counts units pinned at $\varepsilon$ under the
IF-oracle schedule.  The floor binds on at most $0.11\%$ of unit-days in
the US and $0.07\%$ in HK, and never at $\rho\ge10\%$ in HK.  Re-running
the oracle experiment with $\varepsilon\in\{0,10^{-6},10^{-4},10^{-3},10^{-2}\}$
moves MAE by less than the Monte Carlo noise in every cell; if anything a
larger floor slightly \emph{improves} MAE by forcing a small amount of
coverage onto otherwise-ignored units.  Combining this with
Table~\ref{tab:floorbudget}: the $\varepsilon=10^{-2}$ improvement is not
attributable to a larger sample either, since the floor adds at most
$4.7$ expected units there while the ceiling simultaneously removes
$136.8$.  We conclude that none of the reported oracle results are
artefacts of the floor.
\begin{table}[H]
\centering\small
\caption{Units pinned at the inclusion floor $\varepsilon=10^{-4}$ under
the IF-oracle schedule, and oracle MAE as $\varepsilon$ varies.
Reproduced by \texttt{scripts/floor\_diagnostics.py}.}
\label{tab:floor}
\begin{tabular}{@{}llrrrr@{}}
\toprule
Market & $\rho$ & Pinned & Unit-days & Pinned\% & MAE ($\varepsilon{=}0 \to 10^{-2}$) \\
\midrule
US & $5\%$ & 19 & 16,856 & 0.113\% & 4.5557 $\to$ 4.3553 \\
US & $10\%$ & 14 & 16,856 & 0.083\% & 2.8350 $\to$ 2.7717 \\
US & $20\%$ & 6 & 16,856 & 0.036\% & 1.7420 $\to$ 1.7000 \\
US & $30\%$ & 3 & 16,856 & 0.018\% & 1.3294 $\to$ 1.2998 \\
US & $50\%$ & 1 & 16,856 & 0.006\% & 0.8919 $\to$ 0.8756 \\
\midrule
HK & $5\%$ & 7 & 10,773 & 0.065\% & 2.9843 $\to$ 2.9046 \\
HK & $10\%$ & 0 & 10,773 & 0.000\% & 1.8296 $\to$ 1.8016 \\
HK & $20\%$ & 0 & 10,773 & 0.000\% & 1.0943 $\to$ 1.0853 \\
HK & $30\%$ & 0 & 10,773 & 0.000\% & 0.8022 $\to$ 0.7980 \\
HK & $50\%$ & 0 & 10,773 & 0.000\% & 0.5296 $\to$ 0.5295 \\
\bottomrule
\end{tabular}
\end{table}

\subsection{Equal-Weight VRP: Interval Diagnostics}\label{app:ew-certificate}

\begin{table}[H]
\caption{Equal-weight VRP: OI-proportional vs uniform inclusion probabilities, both with \Hajek{} estimation at matched budget. By~\eqref{eq:ifs} this target calls for influence-proportional probabilities. Uniform is included only as a feasible proxy and is predicted to improve on OI-proportional sampling, not to attain the theoretical optimum.}
\label{tab:ext-vrp}
\centering\small
\begin{tabular}{@{}llrrrc@{}}
\toprule
Mkt & $\rho$ & OI-PH MAE & Unif-PH MAE & $\bar\Delta$ & Winner \\
\midrule
US & 5\% & 5.747\tiny{$\pm$1.351} & \textbf{4.292\tiny{$\pm$1.275}} & +1.454 & uniform \\
 & 10\% & 4.415\tiny{$\pm$1.475} & \textbf{2.984\tiny{$\pm$1.025}} & +1.430 & uniform \\
 & 20\% & 3.170\tiny{$\pm$1.126} & \textbf{2.004\tiny{$\pm$0.867}} & +1.166 & uniform \\
 & 30\% & 2.511\tiny{$\pm$1.158} & \textbf{1.546\tiny{$\pm$0.744}} & +0.965 & uniform \\
 & 50\% & 1.655\tiny{$\pm$0.919} & \textbf{1.014\tiny{$\pm$0.521}} & +0.641 & uniform \\
\midrule
HK & 5\% & 4.682\tiny{$\pm$0.886} & \textbf{3.162\tiny{$\pm$0.624}} & +1.520 & uniform \\
 & 10\% & 3.777\tiny{$\pm$0.699} & \textbf{2.320\tiny{$\pm$0.478}} & +1.457 & uniform \\
 & 20\% & 2.687\tiny{$\pm$0.544} & \textbf{1.558\tiny{$\pm$0.319}} & +1.130 & uniform \\
 & 30\% & 2.008\tiny{$\pm$0.427} & \textbf{1.193\tiny{$\pm$0.256}} & +0.815 & uniform \\
 & 50\% & 1.222\tiny{$\pm$0.302} & \textbf{0.768\tiny{$\pm$0.160}} & +0.455 & uniform \\
\bottomrule
\end{tabular}
\end{table}

Table~\ref{tab:ext-vrp} compares the two designs on MAE only.  Since the
body of the paper also claims a coverage repair, we report the matching
interval diagnostics in Table~\ref{tab:ewcert}: coverage and mean width
for both designs.  These are the numbers the coverage claims in
Section~\ref{sec:results} refer to.  Earlier versions of this manuscript
produced this table from a second, independently seeded resampling run,
so its MAE column did not agree exactly with Table~\ref{tab:ext-vrp}.
Both tables are now emitted from the same draws of the single experiment
pipeline, so the MAE columns agree by construction rather than by
calibration.
\begin{table}[H]
\centering\small
\caption{Equal-weight VRP: interval diagnostics under OI-proportional
versus uniform Poisson designs, nominal 95\%, 200 replicates per day.
Uniform is the outcome-blind feasible proxy, not the
influence-proportional oracle.  MAE columns are identical to
Table~\ref{tab:ext-vrp} by construction.  Emitted from
\texttt{scripts/unified\_results.json} (field
\texttt{interval\_diagnostics}).}
\label{tab:ewcert}
\begin{tabular}{@{}llrrrrrr@{}}
\toprule
 & & \multicolumn{3}{c}{OI-proportional (OI-PH)} & \multicolumn{3}{c}{Uniform (Unif-PH)} \\
\cmidrule(lr){3-5}\cmidrule(l){6-8}
Mkt & $\rho$ & MAE & Cov.\% & Width & MAE & Cov.\% & Width \\
\midrule
\multirow{5}{*}{US} & $5\%$ & 5.747 & 78.0 & 19.38 & 4.292 & 89.1 & 18.45 \\
 & $10\%$ & 4.415 & 83.9 & 16.41 & 2.984 & 91.6 & 13.53 \\
 & $20\%$ & 3.170 & 87.5 & 12.59 & 2.004 & 93.1 & 9.36 \\
 & $30\%$ & 2.511 & 89.0 & 10.23 & 1.546 & 93.2 & 7.26 \\
 & $50\%$ & 1.655 & 90.1 & 6.92 & 1.014 & 93.7 & 4.83 \\
\midrule
\multirow{5}{*}{HK} & $5\%$ & 4.682 & 59.2 & 10.98 & 3.162 & 86.0 & 12.66 \\
 & $10\%$ & 3.777 & 70.0 & 10.83 & 2.320 & 90.3 & 10.10 \\
 & $20\%$ & 2.687 & 80.9 & 9.44 & 1.558 & 91.9 & 7.13 \\
 & $30\%$ & 2.008 & 85.5 & 7.79 & 1.193 & 92.3 & 5.53 \\
 & $50\%$ & 1.222 & 89.8 & 5.10 & 0.767 & 92.9 & 3.65 \\
\bottomrule
\end{tabular}
\end{table}

The uniform design improves coverage at every rate in both markets, and
the repair is largest exactly where the mis-specified design fails worst
(HK $\rho=5\%$: $59.2\% \to 86.0\%$).  It also narrows the intervals
rather than widening them, so the calibration gain does not come at the
cost of precision.

%% file: sections/appendix_ordering.tex
\section{Ordering Sensitivity of Fixed-Size Systematic PPS}
\label{app:ordering}

Systematic PPS sweeps a cumulative inclusion-probability vector with a
single random start, so its second-order inclusion structure --- and
therefore its realised error --- depends on the order in which units are
laid out.  Because ``fixed-size PPS attains slightly lower MAE than
Poisson PPS'' would otherwise be an engineering claim resting on an
undocumented implementation detail, we state the ordering used and
quantify how much it matters.

\paragraph{Ordering used in the reported experiments.}
The production implementation (\texttt{draw\_fixed\_pps} in
\texttt{scripts/unified\_experiments.py}) draws a \emph{fresh uniformly
random permutation of the units on every replicate}, from the same
seeded generator that supplies the random start.  It does not sweep in
file, ticker, or OI order.  Averaging over 200 replicates per day
therefore averages over the permutation distribution, which is what
makes the reported Fixed-PPS numbers interpretable at all.

\paragraph{Sensitivity analysis.}
Table~\ref{tab:ordering} re-runs the OI-weighted IV experiment under six
orderings, holding everything else fixed (120 replicates per day, same
inclusion probabilities, same estimator).  Five are plausible incidental
orderings; \texttt{iv\_desc} is adversarial by construction, since it
sorts by the outcome variable itself.  In the released panel the vendor
files are already written in ticker order, so \texttt{file} and
\texttt{ticker} coincide exactly and are reported as a single column;
the table therefore has five distinct columns, four of them
non-adversarial.

\begin{table}[H]
\centering\small
\caption{Fixed-size systematic PPS: OI-weighted IV MAE under different
unit orderings, with the Poisson PPS baseline for reference.
\texttt{random} is the production behaviour.  \texttt{iv\_desc} sorts by
the outcome and is adversarial, not a candidate implementation.
Generated by \texttt{scripts/ordering\_sensitivity.py}.}
\label{tab:ordering}
\begin{tabular}{@{}llrrrrrr@{}}
\toprule
& & Poisson & \multicolumn{5}{c}{Fixed-size systematic PPS, by ordering} \\
\cmidrule(l){4-8}
Mkt & $\rho$ & (ref.) & random & file/ticker & oi\,desc & oi\,asc & iv\,desc$^{\dagger}$ \\
\midrule
\multirow{5}{*}{US} & 5\% & 8.44 & 8.24 & 8.08 & 7.91 & 7.90 & 4.44 \\
 & 10\% & 5.25 & 5.10 & 4.95 & 5.29 & 5.28 & 2.36 \\
 & 20\% & 2.90 & 2.86 & 2.56 & 2.80 & 2.82 & 1.08 \\
 & 30\% & 1.75 & 1.73 & 1.57 & 1.73 & 1.73 & 0.59 \\
 & 50\% & 0.75 & 0.72 & 0.71 & 0.70 & 0.70 & 0.26 \\
\midrule
\multirow{5}{*}{HK} & 5\% & 4.03 & 3.93 & 3.91 & 4.06 & 3.97 & 2.34 \\
 & 10\% & 2.46 & 2.39 & 1.91 & 2.61 & 2.61 & 1.25 \\
 & 20\% & 1.15 & 1.15 & 0.97 & 1.14 & 1.15 & 0.55 \\
 & 30\% & 0.66 & 0.64 & 0.59 & 0.66 & 0.65 & 0.29 \\
 & 50\% & 0.26 & 0.25 & 0.22 & 0.24 & 0.24 & 0.10 \\
\bottomrule
\end{tabular}
\par\smallskip\footnotesize
$^{\dagger}$Adversarial: ordering is a deterministic function of the
outcome variable.  Reported to show the mechanism, not as a usable design.
\end{table}

\paragraph{What this implies for the reported ranking.}
Two conclusions follow, and the first is unfavourable to a claim we made
in earlier versions of this paper.

\begin{enumerate}
\item \emph{The Fixed-versus-Poisson margin is not robust.}  Across the
  four distinct non-adversarial orderings the induced spread in Fixed-PPS MAE
  exceeds the Poisson-minus-Fixed margin in 9 of the 10
  market\,$\times$\,rate cells (the exception is US $\rho=50\%$).  At HK
  $\rho=10\%$ the ordering spread is $0.69$ against a margin of $0.07$,
  an order of magnitude larger.  We therefore withdraw any claim that
  fixed-size PPS is more accurate than Poisson PPS and treat the two as
  empirically indistinguishable on accuracy; the choice between them
  should rest on the structural trade-off (hard bandwidth cap versus
  cheap unbiased variance estimation), not on these MAE differences.

\item \emph{Ordering can dominate the design when it is correlated with
  the outcome.}  Under \texttt{iv\_desc} the MAE roughly halves
  (US $\rho=10\%$: $5.10 \to 2.36$).  This is not a free improvement: it
  requires sorting by the very quantity being estimated, which is
  unavailable in deployment, and it is exactly the target-blindness
  violation that Proposition~\ref{prop:unident} penalises elsewhere in
  this paper.  Its practical import is a warning --- a production
  pipeline whose input order happens to correlate with the outcome
  (for instance, a feed grouped by sector or by liquidity tier) will
  have systematic-PPS error properties that differ materially from the
  randomised-order results reported here.
\end{enumerate}

\paragraph{Recommendation.}
Implementations of fixed-size systematic PPS should randomise the unit
permutation per draw, as ours does, or use a fixed-size design whose
second-order structure does not depend on a sweep order at all, such as
pivotal or cube sampling~\citep{deville1998}.  Absent that, the ordering
must be reported alongside the design, since it is part of the design.

%% file: sections/appendix_reproducibility.tex
\section{Reproducibility Notes}\label{app:repro}

All experiments in this paper are reproducible from the code and data
released with the arXiv submission; raw options panels are proprietary
end-of-day snapshots and are not publicly distributable, but we release
per-underlying summary records with the schema of Table~\ref{tab:schema}
and the full script inventory, so that all tables and figures can be
reproduced exactly from the released data.
This appendix documents the data
schema, script inventory, seed protocol, and runtime environment.

\subsection{Data Schema}

Each trading day $t$ in each market is stored as a single JSON file
\texttt{YYYY-MM-DD.json} of the form
\begin{quote}\small
\texttt{\{"date": "2026-04-14", "market": "US", "count": 211, "data": [record, \dots]\}}
\end{quote}
where each \texttt{record} is a JSON object with the fields listed in
Table~\ref{tab:schema}. Missing values are stored as JSON \texttt{null}.
We use the four fields marked ``core'' in the table; the remaining fields
are ingested from the upstream vendor pipeline but not consumed by any
experiment reported here.

\begin{table}[H]
\caption{Per-underlying record schema. All experiments use only the core
fields.}
\label{tab:schema}
\centering\small
\begin{tabular}{@{}llll@{}}
\toprule
Field & Type & Role & Description \\
\midrule
\texttt{symbol}              & str   & core & Underlying ticker. \\
\texttt{atm\_iv\_avg}          & float & core & ATM implied volatility ($x_i$). \\
\texttt{total\_open\_interest} & int   & core & Total open interest ($w_i$). \\
\texttt{vrp}                 & float & core & Volatility risk premium
                                               (\texttt{atm\_iv\_avg}$-$\texttt{hv\_20}). \\
\texttt{put\_call\_ratio}      & float & core & Put/call volume ratio. \\
\texttt{stock\_price}         & float & aux  & Underlying close. \\
\texttt{atm\_iv\_call/put}     & float & aux  & Directional ATM IV. \\
\texttt{hv\_20, hv\_60}        & float & aux  & Historical volatility. \\
\texttt{iv\_rank\_252}         & float & aux  & IV rank over 252d. \\
\texttt{atm\_delta/gamma/vega/theta} & float & aux & ATM greeks. \\
\texttt{nearest\_expiry(\_days)} & str/int & aux & Nearest option expiry. \\
\bottomrule
\end{tabular}
\end{table}

\subsection{Script Inventory}

\begin{table}[H]
\centering\small
\begin{tabular}{@{}p{0.30\textwidth}p{0.62\textwidth}@{}}
\toprule
Script & Role \\
\midrule
\texttt{scripts/unified\_experiments.py}
   & \textbf{Master experiment} --- the single source of every number in this
     paper. Loops days $\times$ retention rates $\times$ seeds over all five
     designs; produces \texttt{unified\_results.json} (all $2\times4\times5$
     MAE/coverage/significance cells) and \texttt{UNIFIED\_RESULTS.md}
     (Markdown mirror). All seed schedules below are hard-coded here. \\
\texttt{scripts/design\_aware.py}
   & Equal-weight VRP re-run with uniform Poisson-\Hajek{}. Uniform is the
     outcome-blind \emph{feasible proxy} for this target, not the
     Prop.~\ref{prop:optalloc} optimum (which is influence-proportional and
     outcome-dependent). Produces \texttt{design\_aware\_results.json} used in
     Table~\ref{tab:alignment}. \\
\texttt{scripts/design\_aware\_extended.py}
   & Extended verification of Prop.~\ref{prop:optalloc} on the median-PCR and
     IV-dispersion targets. Produces
     \texttt{design\_aware\_extended\_results.json}; see
     Appendix~\ref{app:design-extended}. \\
\texttt{scripts/concentration.py}
   & Computes Gini, HHI and top-10\%-OI coverage per market; produces
     \texttt{concentration\_results.json}. \\
\texttt{scripts/floor\_diagnostics.py}
   & Inclusion-floor ($\varepsilon$) diagnostics for
     Prop.~\ref{prop:optalloc}: units pinned at the floor and MAE
     sensitivity to $\varepsilon$. Produces
     \texttt{floor\_diagnostics.json}; see Appendix~\ref{app:floor}. \\
\texttt{scripts/ordering\_sensitivity.py}
   & Fixed-size systematic PPS under six unit orderings. Produces
     \texttt{ordering\_sensitivity.json}; see Appendix~\ref{app:ordering}. \\
\texttt{scripts/partial\_id\_bounds.py}
   & Top-$k$ worst-case identification widths under oracle and feasible
     bounds, from \texttt{unified\_results.json}. \\
\texttt{scripts/ht\_known\_total.py}
   & Horvitz--Thompson known-total comparison; takes its \Hajek{} column
     directly from \texttt{unified\_results.json} so that Table~\ref{tab:htknown}
     cannot drift from Table~\ref{tab:crossmarket}. \\
\texttt{scripts/gen\_tables.py}, \texttt{gen\_ext\_tables.py}, \texttt{gen\_abstract.py}
   & Emit every \texttt{tables/*.tex} file and the abstract's headline
     figures from \texttt{unified\_results.json}. No number in the body is
     typed by hand. \\
\texttt{scripts/make\_figures.py}
   & Regenerates \texttt{figures/mae\_curves.pdf} and
     \texttt{figures/coverage\_calibration.pdf} from
     \texttt{unified\_results.json}. \\
\midrule
\multicolumn{2}{@{}l}{\emph{Superseded, retained for audit only --- not
  consumed by the current build:}} \\
\texttt{scripts/experiments.py}, \texttt{build\_appendix\_tables.py}, \texttt{feasibility.py}
   & Pre-unification pipeline writing \texttt{results.json}. Superseded by
     \texttt{unified\_experiments.py} after the main-table/appendix
     discrepancy described in Section~\ref{sec:discussion}; kept in the
     repository so the earlier numbers remain auditable. \\
\bottomrule
\end{tabular}
\end{table}

\subsection{Random Seed Protocol}

\begin{itemize}
\item \textbf{Replication index}: for each trading day we run 200
  replications indexed \texttt{s} $\in \{0,1,\dots,199\}$.
\item \textbf{Per-day stream}: replication \texttt{s} of market
  \texttt{mkt} on date \texttt{d} draws from a single stream
\begin{quote}\ttfamily
rng = random.Random((s * 7919) \textasciicircum{} \_stable\_key(mkt, d))
\end{quote}
  and the four probability designs (Poisson PPS, uniform Poisson,
  fixed-size systematic PPS, stratified) consume that stream in this
  fixed order.  Sharing one stream couples the designs on a common
  random number basis, which is deliberate: it removes between-design
  simulation noise from the paired comparisons of
  Section~\ref{sec:results}.  It also means the designs are \emph{not}
  independently seeded, so the paired differences we report are
  variance-reduced relative to independent sampling.
\item \textbf{Deterministic day keys}: \texttt{\_stable\_key} is an
  8-byte \texttt{blake2b} digest of the market and date strings, reduced
  modulo $2^{16}$.  Earlier versions of this code used Python's built-in
  \texttt{hash()} here.  Because CPython randomises string hashing per
  process unless \texttt{PYTHONHASHSEED} is fixed, that made the seed
  schedule --- and therefore every simulated number in the paper ---
  irreproducible across runs, and it was the reason two tables reporting
  the same quantity disagreed in earlier drafts.  All four affected
  scripts now use \texttt{\_stable\_key}, which is stable across
  processes, machines, and Python versions.
\item \textbf{Interval diagnostics}: coverage and mean width are
  recorded for every design from the \emph{same} draws that produce the
  MAE entries, so Tables~\ref{tab:ext-vrp} and~\ref{tab:ewcert} agree by
  construction rather than by calibration.
\item \textbf{Block bootstrap} on the paired mean difference uses
  \texttt{random.Random(20260726)} with 2{,}000 replicates and block
  length $L=\max(2,\lceil T^{1/3}\rceil)$, where $T$ is the number of trading days.
\item \textbf{Median/dispersion SE} uses a nested bootstrap of $B=60$
  resamples drawn from the same per-day stream.
\item \textbf{Fixed-size PPS}: systematic PPS selection with a random
  start drawn from the per-day stream. The cumulative inclusion
  probability vector is normalised to sum to $n$ (entry
  $C_i=\sum_{j\le i}\pi_j$ with $C_N=n$); unit $i$ is selected iff any
  threshold $U+k$ (for integer $k=0,1,\dots,n-1$) falls in
  $(C_{i-1},C_i]$.  Certainty units ($\pi_i=1$) occupy a full unit of
  the cumulative vector and are therefore always selected.  This yields
  exactly $n$ selected units for every realisation of $U$.
\item \textbf{Top-$k$} is deterministic and requires no seed.
\end{itemize}
This seed schedule is hard-coded in
\texttt{scripts/unified\_experiments.py} and
\texttt{scripts/design\_aware*.py}.  Given the released summary records
it reproduces every simulated number in this paper exactly, on any
platform with the same Python minor version; the schedule no longer
depends on \texttt{PYTHONHASHSEED} or on process-local state.

\subsection{Runtime Environment}

\begin{itemize}
\item Python 3.11, NumPy 1.26, SciPy 1.11.
\item macOS on Apple silicon; single-process, no GPU required.
\item Approximate wall time on a laptop-class machine:
  \texttt{unified\_experiments.py}
  $\approx 20$--$40$ minutes (full 2 markets $\times$ 4 targets $\times$
  5 rates $\times$ 200 seeds $\times$ $\sim$80 US / $\sim$82 HK days);
  \texttt{design\_aware.py} and \texttt{design\_aware\_extended.py}
  $\approx 5$--$15$ minutes each.
\item LaTeX build: \texttt{tectonic main.tex} (self-contained; no local
  TeX installation required).
\end{itemize}

\subsection{Variable Construction}\label{app:variable-construction}

The four cross-sectional targets studied in Sections~\ref{sec:experiments}--\ref{sec:results}
are constructed from the core fields of Table~\ref{tab:schema} as follows.

\begin{enumerate}
\item \textbf{OI-weighted IV.}
  $T = \sum_{i} w_i\, x_i \big/ \sum_{i} w_i$,
  where $x_i = \texttt{atm\_iv\_avg}_i$ and $w_i = \texttt{total\_open\_interest}_i$.
  This is the Horvitz--Thompson target when the auxiliary variable coincides
  with the design weight.

\item \textbf{Equal-weight VRP.}
  $T = N^{-1} \sum_{i} \texttt{vrp}_i$,
  with $\texttt{vrp}_i = \texttt{atm\_iv\_avg}_i - \texttt{hv\_20}_i$,
  i.e.\ ATM implied volatility minus 20-day realised volatility, both in
  annualised percentage points (verified field-by-field against the raw
  panels).
  Because the target weights are uniform ($w_i = 1$) while the
  OI-proportional design assigns $\pi_i \propto \texttt{total\_open\_interest}_i$,
  this is the canonical design-misaligned setting of
  Section~\ref{sec:designaware}.

\item \textbf{Median PCR.}
  $T = \mathrm{median}\{\texttt{put\_call\_ratio}_i : i=1,\dots,N\}$.
  Estimated via the inverse-probability-weighted quantile
  (\cite{francisco1991quantile,binder1991median}) with
  multiplier-bootstrap standard errors, exactly as specified in
  Section~\ref{sec:method}.

\item \textbf{IV dispersion.}
  $T = \bigl[N^{-1}\sum_i (x_i - \bar x)^2\bigr]^{1/2}$,
  the \emph{equal-weight} (population) cross-sectional standard deviation
  of ATM IV, where $\bar x = N^{-1}\sum_i x_i$.  Like the equal-weight VRP
  target this carries uniform target weights, so it too is
  design-misaligned against an OI-proportional sampler.  The estimator is
  the design-weighted standard deviation with multiplier-bootstrap
  standard errors (\texttt{dispersion\_est} in
  \texttt{unified\_experiments.py}).
\end{enumerate}

All four targets are computed daily; the MAE and coverage statistics
reported in the paper average over trading days and Monte-Carlo seeds.
No look-ahead information is used: the inclusion probabilities on day~$t$
are determined solely by the OI values observed on day~$t$.

\paragraph{Upstream field definitions.}
The fields above are supplied by the vendor feed rather than computed by
us, so we state their construction as verified against the raw panels.

\begin{itemize}
\item \textbf{\texttt{atm\_iv\_avg}} is the unweighted mean of the ATM
  call and put implied volatilities,
  $\tfrac12(\texttt{atm\_iv\_call}+\texttt{atm\_iv\_put})$; this identity
  holds in all 17{,}880 US records where both legs are present.  It is
  quoted in annualised percentage points.

\item \textbf{Expiry.}  ATM IV is taken from the \emph{nearest listed
  expiry} (\texttt{nearest\_expiry}), not from a constant-maturity
  interpolation.  Time to expiry therefore varies across unit-days:
  median 7 calendar days, mean 7.7, range 3--37.  This is a property of
  the feed we sample from, and it is common to all designs compared, so
  it cannot favour one design over another; but it does mean our targets
  are ``front-expiry ATM IV'' aggregates rather than fixed-tenor ones.

\item \textbf{\texttt{hv\_20}} is the 20-\emph{trading}-day close-to-close
  realised volatility of the underlying, annualised by $\sqrt{252}$ and
  expressed in percentage points, so that
  $\texttt{vrp}=\texttt{atm\_iv\_avg}-\texttt{hv\_20}$ is a difference of
  like-scaled quantities.

\item \textbf{\texttt{put\_call\_ratio}} is a \emph{volume} ratio,
  $\texttt{total\_put\_volume}/\texttt{total\_call\_volume}$, computed on
  the full listed chain for the name (not open interest, and not
  restricted to the ATM strike).  When call volume is zero the vendor
  emits \texttt{null} rather than an infinity; such records (40 of
  18{,}518 in the US panel) are dropped from the median-PCR target by the
  missing-value rule below.

\item \textbf{\texttt{total\_open\_interest}} is the end-of-session total
  open interest across the listed chain, i.e.\ a closing value on day~$t$,
  which is what makes it usable as a same-day design weight.
\end{itemize}

\paragraph{Universe, filtering, and outlier policy.}
The underlying universe is the set of names carried by the daily options
feed for each market; it is not screened by us on liquidity, sector, or
market capitalisation, so the cross-section is whatever the pipeline
would actually have to transport.  Records are admitted to a day's panel
only if both \texttt{atm\_iv\_avg} and \texttt{total\_open\_interest} are
present and the latter is strictly positive (a zero OI would give a
degenerate PPS weight).  The full record count from the raw vendor files
to the analytical panel is given in Table~\ref{tab:dataflow}; the two
stages are applied in the order shown, and the duplicate-session stage
removes whole sessions rather than individual rows, which is why it
accounts for the largest single reduction in the US panel.
Target-specific fields
(\texttt{vrp}, \texttt{put\_call\_ratio}) are additionally required to be
non-null for the target that uses them, and a day is skipped for a given
target if fewer than five units survive.  Duplicate sessions for the same
calendar date are de-duplicated before any computation
(\texttt{dedupe\_sessions}).  \textbf{No winsorisation, trimming, or
outlier removal is applied at any stage}: the maximum observed ATM IV of
$321.2$ is retained.  This is deliberate --- the heavy right tail is
precisely the part of the cross-section that a Top-$k$ retention rule
discards, so trimming it would flatter our own method.

\begin{table}[htbp]
\centering
\caption{Record-level data flow from raw vendor files to the analytical
panel.  Stages are applied in the order listed.  Counts are reproduced by
\texttt{scripts/data\_flow\_audit.py}.}
\label{tab:dataflow}
\small
\begin{tabular}{@{}lrr@{}}
\toprule
Stage & US records & HK records \\
\midrule
Raw vendor rows & 18{,}518 & 10{,}908 \\
$-$ missing ATM IV / OI, or OI $\le 0$ & $-643$ & $-3$ \\
\quad after core-field filter & 17{,}875 & 10{,}905 \\
$-$ sessions with panel $<30$ units & $-0$ & $-0$ \\
$-$ duplicate sessions (whole days) & $-1{,}019$ & $-0$ \\
\midrule
Final analytical panel & 16{,}856 & 10{,}905 \\
Trading sessions retained & 80 & 83 \\
\bottomrule
\end{tabular}
\end{table}

The US duplicate-session stage removes five groups of near-adjacent dates
that the vendor emitted more than once for the same effective session
(around 2 April, 17 April, 22 May, 18 June and 2 July 2026); the
deduplication rule in \texttt{trading\_calendar.dedupe\_sessions} keeps a
single session per effective trading date.  HK has no such duplicates,
which is why its raw and final counts differ by only three records.

\subsection{Data Provenance}

The underlying options panels are our institution's daily end-of-day
snapshots for the US and Hong Kong equity options markets, covering
2026-03-20 through 2026-07-24. We report absolute magnitudes throughout
so that the ranking and calibration results are interpretable without
proprietary field-level knowledge. Ratios of MAE across methods are invariant to a common positive
\emph{scale} change of $x$ (units), and inclusion probabilities are
invariant to a common positive scale change of $w$, since
\eqref{eq:waterfill} renormalises $\lambda$.  We stress that this is
\emph{not} invariance to arbitrary monotone transformations: a
non-linear monotone map of $x$ changes means, standard deviations, the
\Hajek{} estimator and hence coverage, and a non-linear map of $w$
changes both the sampling probabilities and the target weights.